\def\Z{{\mathbb Z}}
\def\F{{\mathbb F}}
\newcommand{\Ort}{\mathcal{O}}
\DeclareMathOperator{\Soc}{Soc}
\DeclareMathOperator{\Rad}{Rad}
\DeclareMathOperator{\Pker}{Pker}
\DeclareMathOperator{\End}{End}
\DeclareMathOperator{\GL}{GL}
\newcommand\Aut{{\mathrm{Aut}}}
\newtheorem{theorem}{Theorem}
\newtheorem{corollary}[theorem]{Corollary}
\newtheorem{proposition}[theorem]{Proposition}
\newtheorem{definition}{Definition}
\title{Efficient quantum algorithms for 
some instances of
the semidirect discrete logarithm problem}
\author{
Muhammad Imran
\\
Department of Algebra and Geometry,
\\
Budapest University of Technology and Economics,
\\
Egry J\'ozsef u. 1,
H-1111 Budapest, Hungary.
\\
E-mail: \texttt{muh.imran716@gmail.com}
\and
G\'abor Ivanyos
\\
Research Institute for Computer Science and Control,
\\
Hungarian Research Network, 
\\
Kende u. 13-17, H-1111 Budapest, Hungary.
\\
E-mail: \texttt{Gabor.Ivanyos@sztaki.hun-ren.hu}
}
\begin{document}
\maketitle
\begin{abstract}
The semidirect discrete logarithm problem (SDLP) is the
following analogue of the standard discrete logarithm problem in the semidirect product semigroup $G\rtimes \End(G)$ 
for a finite semigroup $G$. 
Given $g\in G, \sigma\in \End(G)$, 
and $h=\prod_{i=0}^{t-1}\sigma^i(g)$ for some integer $t$, the SDLP$(G,\sigma)$, for $g$ and $h$, asks to determine
$t$. As Shor's algorithm crucially depends on commutativity,
it is believed
not to be applicable to the SDLP. 
Previously, the best known algorithm for the SDLP was
 based on Kuperberg's subexponential time quantum algorithm. 
Still, the problem plays a central role in  the security of 
certain proposed cryptosystems in the family of \textit{semidirect product key exchange}. This includes
a recently proposed signature protocol
called SPDH-Sign.  
In this paper, we show that the SDLP is even easier 
in some important special cases. Specifically, for a finite group $G$, we
describe quantum algorithms for the SDLP in 
$G\rtimes \Aut(G)$ for the following two classes of instances:  
the first one is when $G$ is solvable and the second is when
$G$ is a matrix group and a power of
$\sigma$ with a polynomially small exponent is an inner automorphism of $G$.
We further extend the results to groups composed of factors
from these classes. 
A consequence is that SPDH-Sign and similar cryptosystems 
whose security assumption is based on the presumed hardness of the SDLP 
in the cases described above are insecure against quantum attacks.
The quantum ingredients we rely on
are not new: these are
Shor's factoring and discrete logarithm algorithms
and well-known generalizations.
\end{abstract}

\section{Introduction}
  The presumed difficulty of computing discrete logarithm problem (DLP) in certain groups is essential for the security of
the Diffie-Hellman key exchange which is the basis for a number of communication protocols deployed today. However, since the invention of Shor's algorithm \cite{shor1994algorithms}, the problem of computing discrete logarithm can be solved efficiently in the domain of quantum computing. 
  
  Massive efforts have been done in order to construct alternative versions of the discrete logarithm problem that allow for
the Diffie-Hellman key exchange without being vulnerable to Shor's algorithm. 
Since that algorithm takes advantage of the group structure underlying the problem, a DLP 
analogue in the framework of commutative group actions has been proposed. 
It is an instance of a constructive membership testing in 
orbits of commutative permutation groups (on large finite sets),
called \emph{vectorization problem}. 
The framework originally appears in \cite{couveignes2006hard}
and it becomes a central problem of isogeny-based cryptography, CSIDH \cite{castryck2018csidh} for example.
  Another natural approach which is worth consideration to escape from the quantum attack is a DLP analogue in 
non-commutative groups. It is natural in a sense that Shor's algorithm crucially depends on the commutativity of the underlying groups. In this direction, 
an analogue of the DLP in the semidirect product groups has been proposed. The proposal firstly appears in its full generality in \cite{habeeb2013public}. 
  Specifically, let $G$ be a finite semigroup and $\End(G)$ be the monoid of endomorphisms of $G$. Then we have the semidirect product $G\rtimes \End(G)$ where the multiplication is defined by $(g, \sigma)(h, \phi)=(g\sigma(h),\sigma\phi)$. Moreover, we have the formula for exponentiation
\[(g,\sigma)^t=\left(\prod_{i=0}^{t-1}\sigma^i(g), \sigma^t\right),\]
where $\prod_{i=k}^\ell a_i$ stands for the product $a_k\cdot\ldots\cdot
a_\ell$ in $G$.  This leads to an
analogue of the standard 
 discrete logarithm problem  in the semidirect product semigroup defined as follows. Given $g\in G, \sigma\in \End(G),$ and
$h=\prod_{i=0}^{t-1}\sigma^i(g)$ for some integer $t$, determine $t$.

The SDLP is interesting as it allows us to perform a Diffie-Hellman key exchange procedure, known as \textit{semidirect product key exchange} (SPDKE). Suppose two parties, Alice and Bob, agree on a public group $G$, an element $g\in G$, and an endomorphism $\sigma \in \End(G)$. Then they can arrive at the same $G-$element as follows.
\begin{itemize}
    \item[1.] Alice picks a random positive integer $x$ and computes $(g, \sigma)^x=\left(A, \sigma^x\right)$. Then, Alice sends $A=\prod_{i=0}^{x-1}\sigma^i(g)$ to Bob.
   \item[2.] Bob also picks a random  positive integer $y$, computes $(g,\sigma)^y=(B,\sigma^y)$ and sends $B=\prod_{i=0}^{y-1}\sigma^i(g)$ to Alice.
   \item[3.] Alice computes its shared key $K_A=A\sigma^x(B)$.
   \item[4.] Bob computes its shared key $K_B=B\sigma^y(A)$.
\end{itemize}
Note that $K_A=K_B$, as the following calculation shows.
\begin{eqnarray*}
\mbox{~~~~~~~~~}A\sigma^x(B)&=&
\prod_{i=0}^{x-1}\sigma^i(g)\prod_{i=0}^{y-1}\sigma^{x+i}(g)
=\prod_{i=0}^{x+y-1}\sigma^i(g)\\
&=&
\prod_{i=0}^{y-1}\sigma^i(g)\prod_{i=0}^{x-1}\sigma^{y+i}(g)\\
&=& B\sigma^y(A).
\end{eqnarray*}
The key recovery problem of SPDKE is the problem of computing the shared key $K_A=K_B$ from the public information $g, A, B\in G$ and $\sigma\in \End(G)$. Clearly, similar to the case of the standard DLP and the corresponding Diffie-Hellman key exchange, the key recovery problem of SPDKE and the difficulty of SDLP are heavily related. Particularly, if one can solve an instance of the SDLP, then one is also able to break the corresponding SPDKE.

In the description of the SDLP above, an instance of the SDLP in $G\rtimes\End(G)$ is only specified by an endomorphism $\sigma$, hence we can describe the 
SDLP in an alternative, more compact way.

First, we observe some properties of semidirect product semigroups that would be useful for our purpose. 
Let $G$ and $T$ be semigroups and let $\sigma:t\mapsto \sigma_t$ be a 
homomorphism from $T$ to the monoid of endomorphisms of $G$. 
Then the semidirect product $G\rtimes_\sigma T$ is the set $G\times T$ 
equipped with the multiplication $(g,t)(g',t')=(g\sigma_t(g'),tt')$. 
It is straightforward to check that $G\rtimes_\sigma T$ is a semigroup. 
Also, if both $G$ and $T$ are finite groups and $\sigma_1$ is the identity 
map of $G$, then $G\rtimes_\sigma T$ is also a group. There is a natural 
representation $\rho:(g,t)\mapsto \rho_{(g,t)}$ of $G\rtimes_\sigma T$ as a 
semigroup of transformations on $G$, given by $\rho_{(g,t)}(g')=g\sigma_t(g')$. 
This is indeed a representation, i.e., a homomorphism to the semigroup of 
transformations, because we have  $(g,t)(g',t')=(\rho_{(g,t)}(g'),tt')$ 
and \[\rho_{(g,t)(g',t')}=\rho_{(\rho_{(g,t)}(g'),tt')}=\rho_{(g,t)}\circ \rho_{(g',t')}.\] 
If $G\rtimes_\sigma T$ is a group as above then $\rho$ gives a permutation 
representation of the group $G\rtimes_\sigma T$.

Note that 
if $G$ is a monoid and $\sigma$ is a monoid endomorphims of $G$
(that is, $\sigma(1_G)=1_G$), then we have 
$(g,1)^t=(\rho_{(g,1)^t}(1_G),t)$. This shows that,
as already observed by Battarbee \emph{et al.} in \cite{battarbee2022subexponential},
the SDLP can be cast as a
constructive membership problem in an orbit of a transformation semigroup. 
Using the above observation and notations
we have the following definition for the semidirect discrete logarithm that will be used throughout this paper.
\begin{definition}
Let $\sigma$ be an endomorphism of the finite monoid
$G$ with identity element $1_G$ and consider 
the semigroup $G\rtimes_\sigma \Z_{\geq 0}$ where
$\sigma_t=\sigma^t$ for every $t\in \Z_{\geq 0}$. 
Then SDLP$(G,\sigma)$ is the following
problem. Given elements $g$ and $h$ of $G$, determine 
the set of non-negative integers $t$ such that \[h=\rho_{(g,1)^t}(1_G).\]
\end{definition}

The set to be determined is either the empty set, a singleton, or 
$\{x_0+ax:x\in \Z_{\geq 0}\}$
for some integers $x_0\geq 0$ and $a>0$. Indeed, an orbit of a semigroup
generated by a single transformation on a finite set 
consists of a {\em tail} of a certain length called the {\em index},
followed by a recurrent {\em cycle}, whose length is called the {\em
period}. The index can be zero while the period is positive. 
Note that these parameters can be computed by a slight
modification of Shor's period finding quantum algorithm, 
see~\cite{childs2014quantum}. In our case,
the transformation semigroup is generated by $\rho_{(g,1)}$ and our
objective is the orbit of it starting at $1_G$. The solution
set is a singleton if $h$ is in the tail, while in the case when
$h$ is in the cycle, it is $\{x_0+ax:x\in \Z_{\geq 0}\}$, where $a$
is the period and $x_0$ is the position of $h$ in the cycle, shifted by
the index.

We remark that the assumptions that $G$ is a monoid and that $\sigma$ is a monoid
endomorphism of $G$ 
are rather technical, though they offer some
notational conveniences. In the general semigroup case, one should
solve the equation 
$h=\rho_{(g,1)^{t-1}}(g)$.

Battarbee \emph{et al.} \cite{battarbee2022subexponential} present a subexponential quantum algorithm for the SDLP in
so-called the \textit{easy} family of semigroups $\{G_p\}_{p\in P}$ 
for some countable set $P$. A family of
semigroups $\{G_p\}_{p\in P}$ is called easy if the size $|G_p|$ 
grows monotonically and polynomial in $p$, 
and the evaluation costs of $gh$ and $\sigma(g)$ is $\Ort((\log p)^2)$ 
for any $p\in P$, $g,h\in G_p$, and $\sigma \in \End(G_p)$. 
Indeed, the critical problem is determining the position of $h$
in the cycle, which is actually an instance of the vectorization 
problem, and hence reduces 
to the abelian hidden shift problem  for which Kuperberg's subexponential 
time algorithm \cite{kuperberg2005subexponential} is available.
On the other hand, there exist several efficient algorithms that 
break the SPDKE protocols in some specific groups without solving the corresponding SDLP, instead exploiting the structure of the platform groups to directly solve the corresponding key recovery problem. See \cite{battarbee2022semidirect} for
a more detailed survey on the semidirect product key exchange. The most recent work in this direction is by Battarbee \emph{et al.} \cite{Battarbee2023aspdh}. They propose a post-quantum signature scheme, called SPDH-Sign, where the security depends on the presumed difficulty of the group case of the SDLP. Moreover, they propose non-abelian groups of order $p^3$ for some odd prime $p$ as candidate groups for SPDH-Sign.

In this paper, we work over black-box groups 
with non-necessarily unique encoding of elements to obtain 
sufficiently general results. (Together with assuming
ability of evaluating powers of $\sigma$, this
corresponds to the easy families 
of  \cite{battarbee2022subexponential}.)
The concept of black-box groups 
was introduced by Babai and Szemer\'edi \cite{babai1984complexity} 
for studying the structure of finite matrix groups. 
Elements of a black-box group $G$ are represented by binary strings 
of a certain length and the group itself is given by a list
of generators. The group operations are given by oracles.
Here we also assume an oracle for 
computing $\sigma^j(g)$ for $g\in G$ and $j\in \Z_{>0}$. 
In general, it is not required that every group element
is represented by a unique code-word. Instead, there is also an 
oracle for testing whether two strings represent the same group element. 
Here we assume a stronger oracle, a {\em labeling}. It is a 
function $\lambda$ defined on the code-words for the group 
elements where $x$ and $y$ represent
the same group element if and only if $\lambda(x)=\lambda(y)$.
We use the term {\em black-box group with
unique labeling} for that sort of black-box groups.
The labeling makes it possible to compute the structure
of $G$ when $G$ is a solvable black-box group
by the quantum algorithm of \cite[Theorem~7]{ivanyos2001efficient}. (In that
paper the term {\em secondary encoding} is used for the labeling.)
The notion includes black-box groups with unique encoding. We need the
generalization in order to handle certain factor groups. 
To illustrate how this can occur, assume
that initially we work with a matrix group $G$ and $\sigma$ is
given as conjugation by a matrix (possibly outside $G$)
and we have another, non-faithful matrix representation $\phi$ of $G$ 
whose kernel is $\sigma$-invariant. Suppose further that we need 
to solve the SDLP for $\phi(g)$ and $\phi(h)$ in $\mbox{Im}(\phi)$ 
and the automorphism induced by $\sigma$. (Recall that this is the unique map 
${\overline \sigma}:\mbox{Im}(\phi)\rightarrow\mbox{Im}(\phi)$
satisfying $\psi(\sigma(x))={\overline \sigma}(x)$. It is well-defined
as the kernel of $\phi$ is required to be $\sigma$-invariant.)
It turns out that we would 
have difficulties with evaluating powers of the induced automorphism 
if we used the natural unique encoding of the 
elements of $\mbox{Im}(\phi)$ by matrices. (In general,
this would require finding
finding an element of the pre-image $\phi^{-1}(x)$ for $\mbox{Im}(\phi)$.)
We get around the issue by using the original
matrices to encode the elements of $\mbox{Im}(\phi)$ and to multiply them;
while considering $\phi$ as a labeling (and possibly also as further help). 
This gives us a simple way to evaluate the induced automorphism.

\medskip

The SDLP$(G, \sigma)$ is called the \textit{group-base} case if $G$ is a group, 
and we call it the \textit{(full) group} case when $G$ is a group and 
$\sigma$ is an automorphism of $G$. In this paper we focus on the 
group-base case. If, in addition, $\sigma$ is an automorphism of $G$
then one could replace the monoid $\Z_{\geq 0}$ with an appropriate
finite cyclic group $\Z_m=\Z/m\Z$ where $m$ is a multiple of the order
of $\sigma$ and work over the finite semidirect product group
of $G$ and $\Z_m$. This justifies the terminology.

\paragraph{Contributions.} In this paper, we provide an analysis of the SDLP 
in some interesting classes of groups. Particularly, in section \ref{sec2}, 
we first give a reduction from the group-base case to the group case of the SDLP. 
Moreover, using essentially the same idea, we show that there exists a recursion from the SDLP in a group into its quotient groups and subgroups. In section \ref{sec3}, we then propose efficient quantum algorithms based on Shor's algorithm for the group case SDLP$(G, \sigma)$ for the following cases: 
\begin{itemize}
    \item[1.] The automorphism $\sigma$ is of small order, i.e., polynomial in $\log|G|$;
    \item[2.] The group $G$ is solvable;
    \item[3.] The group $G$ is a matrix group over a finite field, 
	i.e., $G\leq \GL_d(\F_q)$, where $q$ is a power of a prime and
		$\sigma$ is an inner automorphism of $G$; 
    \item[4.] A flag $1=M_0<M_1<\ldots<M_k=G$ of $\sigma$-invariant 
normal subgroups $M_i\lhd G$ is given together with homomorphisms $\psi_i$
from $M_i$ with kernel $M_{i-1}$ ($i=1,\ldots,k$), where for each $i$,
$\psi_i$ maps $M_i$ to either
\begin{itemize} 
\item[4.1]
a black-box group with unique labeling and when automorphism of
$\mbox{Im}(\psi_i)$ induced by $\sigma$ has polynomially small order; or
\item[4.2]
a solvable black-box group with unique labeling; or 
\item[4.3] a matrix group 
over a finite field, in which case we also assume that a power 
of the induced automorphism  
with a polynomially small exponent coincides with the conjugation 
by some matrix.
\end{itemize}
\end{itemize}

As a consequence, SPDH-Sign protocol in \cite{Battarbee2023aspdh} and 
all other SPDKE cryptographic protocols whose platform groups are 
in the above cases  do not belong to the realm of post-quantum cryptography. 
We remark that, a normal series together with the homomorphisms having the properties
required in item~4.,~can be efficiently computed for quite a wide class 
of finite groups using advanced algorithms of computational group theory. 
These include matrix groups over finite fields of odd characteristic
making the innerness assumption of item~3.~unnecessary when $q$ is odd,
see the Appendix for a sketch of proof. We even think that it is difficult
to propose any "concrete" platform group that item 4.~is not applicable
to, so a viable platform for SPDH-Sign protocol should be a 
semigroup quite far from any group.

\section{Reduction and recursion of SDLP}\label{sec2}
In this section, we provide the reduction of the group-base case to the group case, and we also describe a recursion tool that passes the SDLP in a group to its quotient groups and subgroups. 

From $(g,1)^t=(\rho_{(g,1)^t}(1_G),t)$ we infer the following identity
\begin{equation}
    \rho_{(g,1)^{rt}}(1_G)=\rho_{(\rho_{(g,1)^r}(1_G),r)^t}(1_G).
\end{equation}
We will frequently use this fact to reduce an instance of the SDLP for the endomorphism
$\sigma$ to an instance for $\sigma^r$ in place of $\sigma$ with suitable
choices of $r$.

\subsection{Reduction from the group-base case to the group case}
Let $G$ be a finite group and $\sigma$ be an endomorphism of $G$. We will describe a reduction from SDLP$(G,\sigma)$ to SDLP$(K, \sigma')$ where $K$ is a subgroup of $G$ and $\sigma'$ is the restriction of $\sigma$ to $K$ which forms an automorphism.

Let $K=\cap_{t=0}^\infty \sigma^t(G)$ and let $k_0$ be the smallest non-negative integer such that $K=\sigma^{k_0}(G)$. 
Obviously, $k_0\leq \lceil \log|G| \rceil $. Let $k\geq k_0$, where such a $k$ can be "blindly" chosen by taking an integer greater than 
a known upper bound for $\log|G|$. (Such an upper bound can be $\ell$, where
binary strings of length $\ell$ encode the group elements.)
Then $K=\sigma^k(G)$ and the restriction of $\sigma$ to $K$ is an automorphism of $K$. Let $r$ be the length of the orbit $\{\rho_{(\sigma^k(g),1)^t}(1_G):t\in \Z_{\geq 0}\}$
and put $M=\ker \sigma^{k}=\ker \sigma^{k_0}$. 
Then $K\cong G/M$, $K\cap M=\{1_G\}$, and we have 
\[r=\min \{t\in \Z_{> 0}:\rho_{(\sigma^k(g),1)^t}(1_G)=1_G\}=
\min \{t\in \Z_{> 0}:\rho_{(g,1)^t}(1_G)\in M\}.\]
Let $g'=\rho_{(g,1)^r}(1_G)$. Then $\rho_{(g,1)^{rt}}(1_G)=\rho_{(g',r)^t}(1_G)$. As $g'\in M=\ker\sigma^{k}$, 
for $rt\geq k$ we have $\sigma^{rt}(g')=1_G$, and hence 
$\rho_{(g',r)^{t+1}}(1_G)=g'\sigma^r(g')\cdot\ldots\cdot\sigma^{rt}(g')\sigma^{rt}(g')=
g'\sigma^r(g')\cdot\ldots\cdot\sigma^{r(t-1)}(g')=
\rho_{(g',r)^{t}}(1_G)$.
It follows that 
\begin{equation}\label{eq2}
    \rho_{(g,1)^{r(t+1)+s}}(1_G)= \rho_{(g,1)^{rt+s}}(1_G),
\end{equation}

By equation (\ref{eq2}), 
   if the solution set of the SDLP in $K$ for $\sigma^k(g)$ and $\sigma^k(h)$ is $\{s+rt:t\in \Z_{\geq 0}\}$ for some $0\leq s<r$, then the set of solutions of the SDLP in $G$ for $g$ and $h$ is either the empty set, a singleton $\{s+rt_0\}$, 
or $\{s+rt:t\in \Z_{\geq t_0}\}$, for some $t_0\leq \lceil k_0/r\rceil\leq \lceil\log|G|\rceil$. 
 Therefore, one can solve the SDLP$(G,\sigma)$ for $g$ and $h$ by solving SDLP$(K, \sigma|_K)$ for $\sigma^{k}(g)$ and $\sigma^k(h)$, 
followed by an exhaustive search. This gives the following theorem.

\begin{theorem}
\label{thm:toautocase}
    There is a classical polynomial time reduction from an instance of the group-base case SDLP to an instance of the group case SDLP.
\end{theorem}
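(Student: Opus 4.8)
The plan is to turn the discussion preceding the statement into a formal reduction, using a single call to group-case SDLP as the oracle. Let $\ell$ be the bit-length of the strings encoding elements of $G$, so that $\ell\ge\lceil\log|G|\rceil\ge k_0$; we may therefore choose $k=\ell$ blindly, set $K=\sigma^k(G)$ and $M=\ker\sigma^k$, and note — using only that $G$ is finite and that the chain $G\supseteq\sigma(G)\supseteq\sigma^2(G)\supseteq\cdots$ stabilises at step $k_0\le\log_2|G|$ — that $\sigma$ restricts to a surjective, hence bijective, endomorphism $\sigma|_K$ of $K$, that $\sigma^k|_K$ is an automorphism of $K$ so $K\cap M=\{1_G\}$, and that $\sigma^k$ induces an isomorphism $G/M\cong K$. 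The instance we output is $(K,\sigma|_K,\sigma^k(g),\sigma^k(h))$: it is a genuine group-case instance, and generators of $K$ together with $\sigma^k(g)$ and $\sigma^k(h)$ are produced by a polynomial number of calls to the $\sigma$-power oracle and the group oracles.

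Next I would check that a solution of the output instance constrains a solution of the input one. From $\rho_{(g,1)^t}(1_G)=\prod_{i=0}^{t-1}\sigma^i(g)$ one gets the identity $\rho_{(\sigma^k(g),1)^t}(1_G)=\sigma^k\!\bigl(\rho_{(g,1)^t}(1_G)\bigr)$. Since $x\mapsto\sigma^k(g)\,\sigma(x)$ is a bijection of the finite group $K$ — a composition of left multiplication by $\sigma^k(g)\in K$ with the bijection $\sigma|_K$ — the orbit of $1_G$ under $\rho_{(\sigma^k(g),1)}$ in $K$ is a pure cycle, of length $r=\min\{t>0:\rho_{(\sigma^k(g),1)^t}(1_G)=1_G\}$; hence the solution set of the output instance is either empty or a full progression $\{s+rt:t\in\Z_{\geq0}\}$ with $0\le s<r$, from which both $s$ and $r$ are read off. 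By the displayed identity, $h=\rho_{(g,1)^t}(1_G)$ forces $\sigma^k(h)=\rho_{(\sigma^k(g),1)^t}(1_G)$, so every solution $t$ of the input instance lies in $\{s+rt':t'\in\Z_{\geq0}\}$; in particular, if the output instance is infeasible, so is the input one.

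It then remains to pin down which $t'$ occur, and to do so in polynomial time. With $g'=\rho_{(g,1)^r}(1_G)\in M=\ker\sigma^k$ one has $\rho_{(g,1)^{rt'}}(1_G)=\rho_{(g',r)^{t'}}(1_G)$ and $\rho_{(g,1)^{s+rt'}}(1_G)=\rho_{(g,1)^s}\!\bigl(\rho_{(g',r)^{t'}}(1_G)\bigr)$; since $\sigma^{rt'}(g')=1_G$ once $rt'\ge k$, equation (\ref{eq2}) shows that the values $v_{t'}:=\rho_{(g,1)^{s+rt'}}(1_G)$ are constant for $t'\ge\lceil k/r\rceil=\lceil\ell/r\rceil$. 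Combining $S\subseteq\{s+rt':t'\ge0\}$ with this stabilisation and the general fact recorded above that the solution set $S$ of SDLP$(G,\sigma)$ is empty, a singleton, or a ray $\{x_0+ax:x\ge0\}$, one concludes that $\{t':v_{t'}=h\}$ is empty, a single point below $\lceil\ell/r\rceil$, or the whole ray $\{t'\ge t_0\}$ for some $t_0\le\lceil\ell/r\rceil$, the last case occurring precisely when $v_{\lceil\ell/r\rceil}=h$. The algorithm therefore computes $g'$ and $\rho_{(g,1)^s}(1_G)$, incrementally evaluates $v_0,\dots,v_{\lceil\ell/r\rceil}$ (one $\sigma$-power oracle call and one multiplication per step), tests each against $h$ with the equality oracle, and outputs the resulting empty/singleton/ray solution set for $G$; every step is classical and polynomial in $\ell$.

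The step I expect to be the main obstacle is not the algebra but the efficiency bookkeeping: $r$ and $s$ can be exponentially large in $\ell$, so $\rho_{(g,1)^r}(1_G)=\prod_{i=0}^{r-1}\sigma^i(g)$ cannot be formed naively — this is exactly where square-and-multiply in the semidirect product $G\rtimes_\sigma\Z_{\geq0}$, hence the assumed $\sigma$-power oracle, is essential — and one must verify that the post-processing search has only $O(\lceil\ell/r\rceil)\le O(\ell)$ steps and that the three possible shapes of $S$ are correctly distinguished from the finitely many computed values $v_{t'}$. The remaining ingredients — stabilisation of the chain defining $K$, the bijectivity argument giving a pure cycle in $K$, and $G/M\cong K$ — are routine.
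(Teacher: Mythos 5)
Your reduction is correct and follows the same route as the paper's: pass to $K=\sigma^k(G)$ with a blindly chosen $k\ge k_0$, solve the group-case instance for $\sigma^k(g)$ and $\sigma^k(h)$ to obtain the residue class $\{s+rt'\}$, and recover the solution set in $G$ by an exhaustive search over the $\lceil\ell/r\rceil+1$ candidates $s+rt'$, justified by the stabilisation coming from $g'=\rho_{(g,1)^r}(1_G)\in\ker\sigma^k$. Your extra bookkeeping (the identity $\rho_{(\sigma^k(g),1)^t}(1_G)=\sigma^k\bigl(\rho_{(g,1)^t}(1_G)\bigr)$ and the square-and-multiply evaluation of $\rho_{(g,1)^r}$) only makes explicit what the paper leaves implicit.
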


\subsection{An easy reduction}

In the group case, we have the following simple reduction based on 
brute force. This will be useful when a power of the automorphism
$\sigma$ with polynomially small exponent has some desired property.

\begin{proposition}
\label{prop:topowerofaut}
Assume that $\sigma$ is an automorphism of the group $G$.
Then, for every positive integer $k$, 
SDLP$(G,\sigma)$ can be reduced to $k$ instances of 
SDLP$(G,\sigma^{k})$. 
\end{proposition}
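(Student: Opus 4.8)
\noindent The plan is to split the unknown exponent by Euclidean division, writing $t=s+qk$ with $0\le s<k$ and $q\ge 0$; this is exactly the decomposition that meshes with the basic identity $\rho_{(g,1)^{rt}}(1_G)=\rho_{(\rho_{(g,1)^{r}}(1_G),r)^{t}}(1_G)$ used at the start of this section when one takes $r=k$. For $0\le s\le k$ put $g_s:=\rho_{(g,1)^{s}}(1_G)=\prod_{i=0}^{s-1}\sigma^{i}(g)$, so that $g_0=1_G$ and $g_k=\prod_{i=0}^{k-1}\sigma^{i}(g)$. The first step is to establish the identity
\[
\rho_{(g,1)^{s+qk}}(1_G)=g_s\cdot\sigma^{s}\Bigl(\textstyle\prod_{j=0}^{q-1}\sigma^{kj}(g_k)\Bigr)
\qquad(q\ge 0,\ 0\le s<k),
\]
which follows by expanding $(g,1)^{s+qk}=(g,1)^{s}\bigl((g,1)^{k}\bigr)^{q}=(g_s,s)\,(g_k,k)^{q}$ in the semidirect product and reading off the first coordinate (equivalently, by applying $\rho_{(g,1)^{s}}\colon x\mapsto g_s\sigma^{s}(x)$ to the special case $\rho_{(g,1)^{kq}}(1_G)=\rho_{(g_k,k)^{q}}(1_G)$ of that identity). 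The inner product $\prod_{j=0}^{q-1}\sigma^{kj}(g_k)=\prod_{j=0}^{q-1}(\sigma^{k})^{j}(g_k)$ is exactly the quantity defining SDLP$(G,\sigma^{k})$ for base element $g_k$ and exponent $q$.

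Next I would read the reduction off this identity. Since $\sigma$ is an automorphism, so is $\sigma^{k}$, and $\sigma^{-1}$ — hence $\sigma^{-s}$ — is available (for instance $\sigma^{-1}=\sigma^{\ord(\sigma)-1}$, and $\ord(\sigma)$, or a multiple of it, can be computed). For each $s\in\{0,\dots,k-1\}$ set $h_s:=\sigma^{-s}\bigl(g_s^{-1}h\bigr)$; the elements $g_0,\dots,g_k$ and $h_0,\dots,h_{k-1}$ are computed with $\Ort(k)$ group operations and evaluations of powers of $\sigma$ and $\sigma^{-1}$. The displayed identity then shows that, for $t=s+qk$, one has $h=\rho_{(g,1)^{t}}(1_G)$ if and only if $\prod_{j=0}^{q-1}(\sigma^{k})^{j}(g_k)=h_s$, that is, if and only if $q$ lies in the solution set $S_s$ of the instance of SDLP$(G,\sigma^{k})$ with base $g_k$ and target $h_s$. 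Because every $t\ge 0$ has a unique such representation $t=s+qk$, the solution set of the original instance is precisely $\bigcup_{s=0}^{k-1}\{\,s+qk:q\in S_s\,\}$. Hence it suffices to solve the $k$ instances of SDLP$(G,\sigma^{k})$ obtained for $s=0,\dots,k-1$ (all with the same base element $g_k$, only the targets differing) and to assemble their solution sets; the whole reduction is classical and, apart from the $k$ oracle calls, runs in time polynomial in $k$ and the input size.

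There is no substantial obstacle here; the reduction is routine once the identity above is in hand. The two points that need a little care are: (i) the use of $\sigma^{-1}$, which is legitimate precisely because $\sigma$ is assumed to be an automorphism — for a general endomorphism $\sigma^{s}$ need not be injective and $h_s$ would not be well defined, which is why the hypothesis of the proposition is stated as it is; and (ii) packaging the output, namely checking that the union $\bigcup_{s=0}^{k-1}\{s+qk:q\in S_s\}$ of the shifted and dilated sets $S_s$ still has the canonical ``empty / singleton / $\{x_0+ax\}$'' form required of an SDLP solution set. The latter is automatic, since that union is literally the solution set of SDLP$(G,\sigma)$ for $g$ and $h$.
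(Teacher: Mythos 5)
Your proof is correct and follows essentially the same route as the paper: decompose the exponent as $t=s+qk$, use the identity $\rho_{(g,1)^{s+qk}}(1_G)=\rho_{(g,1)^{s}}\bigl(\rho_{(\rho_{(g,1)^{k}}(1_G),k)^{q}}(1_G)\bigr)$, and reduce to the $k$ instances of SDLP$(G,\sigma^{k})$ with base $g_k=\rho_{(g,1)^{k}}(1_G)$ and targets $h_s=\rho_{(g,1)^{-s}}(h)=\sigma^{-s}(g_s^{-1}h)$. Your additional remarks on computing $\sigma^{-1}$ and on assembling the solution sets are correct elaborations of points the paper leaves implicit.
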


\begin{proof}
We look for the smallest
non-negative solution of the SDLP in the form 
$s+tk$ for $s=0,\ldots,k-1$. We have 
\begin{equation*}
    \begin{split}
        \rho_{(g,1)^{s+tk}}(1_G) & =\rho_{(g,1)^{s}}(\rho_{(g,1)^{tk}}(1_G))\\
        & =\rho_{(g,1)^{s}}(\rho_{(g,1)^{k}}^t(1_G)))\\
        & =\rho_{(g,1)^{s}}(\rho_{(\rho_{(g,1)^{k}}(1_G),k)}^t(1_G))),
    \end{split}
\end{equation*}
whence $h=\rho_{(g,1)^{s+tk}}(1_G)$ 
if and only if $\rho_{(g,1)^{-s}}(h)=\rho_{(\rho_{(g,1)^{k}}(1_G),k)^t}(1_G))$.
Let $g'=\rho_{(g,1)^{k}}(1_G)$ and $h'=\rho_{(g,1)^{-s}}(h)$.
Then, we need to solve the SDLP for $g'$ and $h'$, where we replace
$\sigma$ by $\sigma^{k}$.
\end{proof}

\subsection{Recursion into quotient groups and subgroups}\label{recursion}
We will show that one can solve the SDLP$(G,\sigma)$, for a group $G$ and $\sigma\in \Aut(G)$, by recursively solving an instance of the SDLP in a quotient group and a subgroup of $G$. 
The main idea of recursion is essentially the same 
as those used in the preceding subsections.

\begin{theorem}
\label{thm:recurs}
Let $G$ and $\overline G$ be black-box groups with 
unique labeling and let an
automorphism $\sigma$ of $G$ be given by a black box
for evaluating the powers $\sigma^i$ on codewords for
group elements. Assume that we are given
a $\sigma$-invariant normal subgroup $M$ of $G$
and a group homomorphism $\psi:G\rightarrow \overline G$ 
with kernel $M$. We assume that $\psi$ can be evaluated efficiently
and we have a black box for evaluating
powers of the automorphism $\overline \sigma$ of $\mbox{Im}(\psi)$
induced by $\sigma$. Then SDLP$(G,\sigma)$
can be reduced to an instance of 
SDLP$(\mbox{Im}(\psi),{\overline \sigma})$ and
an instance of SDLP$(M,\sigma_{|M}^{n_0})$ 
for some integer $n_0$. 
\end{theorem}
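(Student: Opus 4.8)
The plan is to mimic the reduction of Section~\ref{sec2} but now splitting the orbit according to the normal subgroup $M$ rather than according to $\ker\sigma^k$. Since $\sigma$ is an automorphism of $G$ and $M$ is $\sigma$-invariant, $\sigma$ induces an automorphism $\sigma_{|M}$ of $M$ and the automorphism $\overline\sigma$ of $\mbox{Im}(\psi)$ with $\psi(\sigma(x))=\overline\sigma(\psi(x))$; both have black boxes for their powers by hypothesis (for $\sigma_{|M}$ we just restrict the given black box for $\sigma$, noting that $\sigma$ maps codewords for $M$-elements to codewords for $M$-elements). First I would push the given instance down to $\mbox{Im}(\psi)$: applying $\psi$ to $h=\rho_{(g,1)^t}(1_G)$ and using that $\psi$ is a homomorphism intertwining $\rho_{(g,1)}$ with $\rho_{(\psi(g),1)}$ (computed with $\overline\sigma$), we get $\psi(h)=\rho_{(\psi(g),1)^t}(1_{\overline G})$. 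Solving SDLP$(\mbox{Im}(\psi),\overline\sigma)$ for $\psi(g)$ and $\psi(h)$ returns the solution set modulo the ``period downstairs'': either $\emptyset$ (in which case the original set is empty and we are done), or a singleton, or an arithmetic progression $\{s+rt:t\in\Z_{\geq0}\}$ with $0\le s<r$.

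Next I would lift. Set $g'=\rho_{(g,1)^r}(1_G)\in G$ and $h'=\rho_{(g,1)^{-s}}(h)$, which are computable since $\sigma$ is invertible; the same chain of identities as in Proposition~\ref{prop:topowerofaut} gives that $h=\rho_{(g,1)^{s+rt}}(1_G)$ iff $h'=\rho_{(g',r)^t}(1_G)$, i.e.\ iff $h'$ lies in the orbit of the transformation $\rho_{(g',r)}$ starting at $1_G$, where the relevant endomorphism is $\sigma^r$. Now the key point is that $g'\in M$: indeed $\psi(g')=\rho_{(\psi(g),1)^r}(1_{\overline G})=1_{\overline G}$ because $r$ is (a multiple of) the period of the downstairs orbit and $\psi(g)\cdot\overline\sigma(\psi(g))\cdots$ returns to the identity after $r$ steps. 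Hence the whole sub-orbit $\{\rho_{(g',r)^t}(1_G):t\ge0\}$ lives inside $M$, the transformation $\rho_{(g',r)}$ restricts to $M$, and $\sigma^r$ restricts to an automorphism of $M$. A minor wrinkle: $h'$ need not lie in $M$; but it lies in the orbit only if it does, so we first test $\psi(h')=1_{\overline G}$, discard if not, and otherwise solve SDLP$(M,\sigma_{|M}^{r})$ for $g'$ and $h'$. Taking $n_0=r$ completes the reduction, since a solution $t$ of the $M$-instance yields solutions $s+rt$ of the original, and pulling back the solution-set structure (empty / singleton / progression) through these affine substitutions gives a set of the required form.

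The main obstacle I anticipate is bookkeeping with encodings and the exact index/period structure, not any deep idea. Concretely: (i) one must make sure the black box for $\sigma_{|M}^{r}$ is legitimately available — raising $\sigma$ to the $r$-th power on codewords is fine since $r\le\lceil\log|\overline G|\rceil$ is polynomially bounded, but one should note that we only need to apply it to codewords representing elements of $M$, so no pre-image computation for $\psi$ is needed, which is exactly the reason the ``unique labeling'' (rather than unique encoding) setting is used; (ii) one must verify that $r$ as returned by the downstairs solver is a positive multiple of the true period so that the ``return to identity'' argument for $g'\in M$ is valid — if the downstairs solution set is a singleton (tail element) rather than a progression, one instead uses $r=$ the period of the downstairs orbit, obtainable from the period-finding step, and handles the finitely many tail shifts by brute force as in Theorem~\ref{thm:toautocase}; (iii) one must combine the affine reparametrisations cleanly to conclude the solution set of SDLP$(G,\sigma)$ has the canonical shape. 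None of these is hard, but getting the quantifiers on ``$n_0$'' and the tail/cycle cases exactly right is where the care goes.
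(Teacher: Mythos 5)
Your reduction is correct and follows essentially the same route as the paper: solve the instance in $\mbox{Im}(\psi)$ to obtain the residue class $t_0+n_0\Z$ (your $s+r\Z$), then shift via $g'=\rho_{(g,1)^{n_0}}(1_G)\in M$ and $h'=\rho_{(g,1)^{-t_0}}(h)\in M$ and recurse into SDLP$(M,\sigma_{|M}^{n_0})$. Two minor remarks: your test $\psi(h')=1_{\overline G}$ always passes (it is forced by $s$ being a downstairs solution), and your claim $r\le\lceil\log|\overline G|\rceil$ is false ($r$ is the downstairs orbit length, which can be exponential), but harmlessly so, since the hypothesis already supplies a black box for evaluating arbitrary powers $\sigma^i$.
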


\begin{proof}
Every solution of SDLP$(G,\sigma)$ for $g$ and $h$ is a solution of 
the SDLP$(\mbox{Im}(\psi),\overline{\sigma})$ for $\psi(g)$ and $\psi(h)$.
If there is no solution for the problem in $\mbox{Im}(\psi)$, 
then there is no solution for the problem in $G$ either. 

Otherwise, the set of solutions in 
$\mbox{Im}(\psi)$ is the residue class $\{t_0+n_0t\}$ for 
some $0\leq t_0< n_0$, where $n_0=|\{\rho_{(\psi(g),1)^t}(1_G):t\in \Z\}|$. 
Note that $n_0$ is the
smallest positive integer such that $\rho_{(g,1)^{n_0}}(M)=M$.
We have $g'=\rho_{(g,1)^{n_0}}(1_G)\in M$ and also 
\[h'=\rho_{(g,1)^{-t_0}}(h)=(\rho_{(g,1)^{-t_0}}\circ\rho_{(g,1)^{t_0+n_0t}})(1_G)=\rho_{(g,1)^{n_0t}}(1_G)\in M.\] 
This gives that
    the solutions of 
SDLP$(G,\sigma)$ for $g$ and $h$ are exactly the numbers of 
the form $t_0+n_0t$, where $t$ is a solution of 
SDLP$(M, \sigma^{n_0})$ for $g'$ and $h'$. 
\end{proof}

By considering the equivalent "backward" version of the SDLP, that is, solving
$1_G=\rho_{(g,1)^{t}}(h)$, the recursion suggested by the proof of the
theorem can be interpreted as driving first to $M$ by solving the SDLP
in $\mbox{Im}(\psi)\cong G/M$ and then, inside $M$, driving further
to the identity element.

A general straightforward way to evaluate 
the induced automorphism (and its powers) is 
based on computing an arbitrary element of the pre-image 
$\psi^{-1}({\overline x})$ for each $x\in\mbox{Im}(\psi)$.
This can be facilitated by replacing $\overline G$ with
the black-box group $H$ encoded by pairs $(x,\psi(x))$,
where $x$ is a code-word for an element of $G$. For multiplication
we use the oracle for $G$ and re-evaluate $\psi$ on the product.
For labeling, we use the labeling of $\overline G$. Of course,
there are many cases when this trick can be replaced by a
simple direct method for evaluating $\overline \sigma$.
This holds in particular when $\overline G=\Z_p^d$ with
the standard representation by column vectors modulo $p$.

 \section{Quantum algorithms for the group case SDLP}\label{sec3}
In this section, we will prove the following main result of the paper.

\begin{theorem}\label{thm:main}
    Let $G$ be a group and $\sigma\in \Aut(G)$. 
We assume that $G$ is a black-box group with a unique labeling 
of elements and we also have a black box for computing $\sigma^i(g)$
($i\in \Z{\geq 0},g\in G$). 
Suppose that we are given 
a series $1=M_0<M_1<\ldots M_k=G$ of $\sigma$-invariant normal
subgroups $M_i\lhd G$ together with homomorphisms $\psi_i:M_i\rightarrow
 {\overline G}_i$ ($i=1,\ldots,k$) with kernel $M_{i-1}$ ($i=1,\ldots,k$).
Let ${\overline \sigma}_i$ denote the automorphism of $\mbox{Im}(\psi_i)$
induced by $\sigma_{|M_i}$.
Assume further that, for each $i$, either
\begin{enumerate}
\item[(0)] $\mbox{Im}(\psi_i)$ is of polynomial size; or
\item[(1)] ${\overline \sigma}_i$ has polynomial order; or
\item[(2)] $\mbox{Im}(\psi_i)$ is solvable;
\item[(3)] ${\overline G}_i\leq \GL_{d_i}(\F_{q_i})$ for some
positive integer $d_i$ and for some prime power $q_i$, moreover,
there exists a polynomially bounded integer $n_i$ and a matrix
$a_i\in \GL_{d_i}(\F_{q_i})$ such that
${\overline \sigma}_i^{n_i}(x)=a_i^{-1}xa_i$ for every 
$x\in \mbox{Im}(\psi_i)$. 
\end{enumerate}
For items (0), (1) and (2), we assume that ${\overline G_i}$
is a black-box group with unique labeling. For item (4), neither
$n_i$ nor $a_i$ are assumed to be given, their mere existence
is sufficient.
(By "polynomial" we mean polynomial in the 
maximum of the lengths
of the bit strings used for encoding and labeling the elements
of the groups $G$ and ${\overline G}_i$ ($i=1,\ldots,k$).
Then SDLP$(G,\sigma)$ can be solved in quantum polynomial time. 
\end{theorem}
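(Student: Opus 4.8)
The plan is to iterate the recursion of Theorem~\ref{thm:recurs} along the given normal series. Applying that theorem with $M=M_{k-1}$ and $\psi=\psi_k$ reduces SDLP$(G,\sigma)$ to one instance of SDLP$(\mbox{Im}(\psi_k),{\overline\sigma}_k)$ together with one instance of SDLP$(M_{k-1},\sigma_{|M_{k-1}}^{n_0})$ for a computable integer $n_0$. The top instance lives in one of the four base classes and will be handled by the case analysis below. For the bottom instance we recurse: $M_{k-1}$ carries the shorter $\sigma^{n_0}$-invariant series $1=M_0<\cdots<M_{k-1}$, and each factor $M_i/M_{i-1}$ is still realized by $\psi_i$; the automorphism induced on $\mbox{Im}(\psi_i)$ is now ${\overline\sigma}_i^{n_0}$, which remains in the same class (polynomial size is unchanged; polynomial order is only divided; solvability is unchanged; and a power of ${\overline\sigma}_i$ being a conjugation implies the same for any further power, with a polynomially bounded exponent since $n_0$ is polynomially bounded by the remark after Theorem~\ref{thm:recurs} — it is bounded by $|\mbox{Im}(\psi_k)|$, hence by $2^\ell$, but more carefully one tracks that the exponents multiply and stay polynomial because the depth $k$ is bounded by $\ell$). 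After at most $k\le\ell$ levels of recursion we reach the trivial group, where the SDLP is trivial. Since each level produces exactly one child instance of smaller depth, the total number of base-case instances solved is $O(k)=O(\ell)$, so it suffices to solve each base case in quantum polynomial time and to verify the invariants are maintained.

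It remains to dispatch the four base cases for an instance SDLP$(\overline G,\tau)$ where $\overline G$ is a black-box group with unique labeling and $\tau$ is one of the induced automorphisms (or a power of one). Case~(0): if $|\overline G|$ is polynomial, build the orbit of $\rho_{(g,1)}$ starting at $1_{\overline G}$ explicitly and locate $h$; this is classical polynomial time. Case~(1): if $\tau$ has polynomial order $m$, then by Proposition~\ref{prop:topowerofaut} (applied with $k=m$) SDLP$(\overline G,\tau)$ reduces to $m$ instances of SDLP$(\overline G,\tau^m)=\mbox{SDLP}(\overline G,\mathrm{Id})$, and in the identity case $\rho_{(g,1)^t}(1_{\overline G})=g^t$, so the problem is an ordinary discrete logarithm in $\overline G$, solved by Shor's algorithm (we need the period of $\rho_{(g,1)}$ too, which is the order of $g$, again via Shor). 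Case~(2): if $\mbox{Im}(\psi_i)$ is solvable, invoke the quantum structure-computation algorithm of \cite[Theorem~7]{ivanyos2001efficient} to obtain a $\tau$-refinable composition series with elementary abelian factors realized by explicit homomorphisms; feeding this refined series back into the recursion of Theorem~\ref{thm:recurs} reduces the solvable case to a bounded number of instances over $\Z_p^d$ with the standard column-vector labeling, on which the induced automorphism is a matrix and its powers are evaluated directly — this lands us in Case~(3) (or is handled directly as an affine discrete-log over $\Z_p^d$). Case~(3): here $\tau^{n}=\,$conjugation by $a\in\GL_d(\F_q)$ for some polynomially bounded $n$; apply Proposition~\ref{prop:topowerofaut} with $k=n$ to reduce to $n$ instances of SDLP$(\overline G,\tau^n)$, so we may assume $\tau(x)=a^{-1}xa$. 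Then $\rho_{(g,1)^t}(1)=g\,a^{-1}ga\cdots a^{-(t-1)}ga^{t-1} = (ga^{-1})^t a^{t}$ — wait, more carefully $\prod_{i=0}^{t-1}a^{-i}ga^{i} = (ga^{-1})^{t}a^{t}$ does not hold verbatim, but the standard telescoping identity $\prod_{i=0}^{t-1}\tau^i(g)=(g a^{-1})^t a^{t}$ when one writes $\tau^i(g)=a^{-i}ga^{i}$ and regroups; in any case the point is that $h=\rho_{(g,1)^t}(1)$ becomes an equation of the form $(ga^{-1})^{t}=h a^{-t}$ inside the larger group $\langle \overline G, a\rangle\le\GL_d(\F_q)$, i.e. a discrete-logarithm-type equation in a matrix group over a finite field. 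Discrete logarithm in $\GL_d(\F_q)$ is solvable in quantum polynomial time: reduce via the orbit/period machinery and Shor's algorithm, using that the relevant cyclic subgroup has order dividing $|\GL_d(\F_q)|$ which has polynomially many bits; concretely one diagonalizes or uses the known reduction of matrix-group DLP to DLP in extension fields $\F_{q^{d!}}$ (or iterated Shor on the rational canonical form), all efficient since $d$ and $\log q$ are polynomial. The one subtlety is that $n$ and $a$ are not given in Case~(3): but $n$ polynomially bounded means we may try $n=1,2,3,\ldots$ up to the bound, and for each candidate $n$ search for $a$ by solving the linear system $\tau^n(x)a = a x$ over $\F_q$ for all $x$ in a generating set — a polynomial-size linear-algebra problem — so $a$ is recoverable, and we never actually need it to be unique.

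The main obstacle I anticipate is not any single base case — each is a known application of Shor's algorithm or of the solvable black-box structure algorithm — but rather the bookkeeping that the recursion of Theorem~\ref{thm:recurs} preserves all four invariants \emph{with polynomial parameters} down every branch. Specifically one must check: (a) the multiplier $n_0$ at each level is polynomially bounded (it equals the period $|\{\rho_{(\psi(g),1)^t}(1)\}|$, hence at most $|\mbox{Im}(\psi_i)|$, but the product of the $n_0$'s down a branch of length $k\le\ell$ must be controlled — in fact only the \emph{current} $\tau$ changes to a power, and since each base-case algorithm only cares about $\mathrm{order}(\tau)$ or $\mathrm{order}(g)$, which only shrink, no blow-up occurs); (b) the black-box-with-unique-labeling hypothesis survives — for the solvable case this is exactly why \cite{ivanyos2001efficient} needs the secondary encoding, and for passing to subgroups $M_i$ one restricts the labeling; (c) in Case~(3), that "a power with polynomially small exponent is a conjugation" is inherited by further powers, which is immediate since $(\tau^n)^m = \tau^{nm}$ and $nm$ stays polynomial when $m$ is polynomial. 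Once these invariants are nailed down, the theorem follows by induction on $k$.
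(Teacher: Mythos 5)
Your overall strategy is the same as the paper's: walk down the given normal series applying Theorem~\ref{thm:recurs} at each level, encode elements of $\mbox{Im}(\psi_i)$ by pairs $(x,\psi_i(x))$ so that powers of the induced automorphism can be evaluated, and dispatch each factor to the appropriate base-case subroutine (brute force, Proposition~\ref{prop:topowerofaut} plus Shor for small order, the solvable machinery of \cite{ivanyos2001efficient} for case~(2)). Your bookkeeping remarks (a)--(c) --- in particular that only the \emph{current} automorphism gets raised to powers, so orders only shrink and the ``inner power'' property is inherited with a still-polynomial exponent --- are correct and are actually spelled out more carefully than in the paper's own proof.

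The one genuine gap is your treatment of case~(3). After reducing to $\tau(x)=a^{-1}xa$, the equation $h=\prod_{i=0}^{t-1}\tau^i(g)=(ga^{-1})^t a^t$ is \emph{not} a discrete logarithm in $\GL_d(\F_q)$: rewriting it as $(ga^{-1})^t=ha^{-t}$ leaves the unknown $t$ on both sides with two non-commuting bases, so diagonalization, rational canonical forms, or ``matrix-group DLP over extension fields'' do not apply --- indeed this equation is exactly an instance of the SDLP itself, cast as $w=u^tv^{-t}$. The correct route (and the one the paper takes) is to observe that $\rho_{(g,1)}=\mu_{ga^{-1}}\circ(\,\cdot\,\mapsto \mbox{right mult.\ by }a)$ extends to an invertible \emph{linear} map $\Phi$ on the $d^2$-dimensional algebra $\M_d(\F_q)$, so that $h=\Phi^t I_d$ is an instance of the Orbit Problem; one then needs the Kannan--Lipton reduction to the Matrix Power Problem $B=A^t$ before Shor's algorithm becomes applicable. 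This is precisely the content of Theorem~\ref{thm:matrix-inner} and Corollary~\ref{cor:matrix-powerinner}, which you should invoke here (your method for recovering $n_i$ and $a_i$ by trying small exponents and solving the linear system $\tau^n(x)a=ax$ is fine and matches the paper's). With that substitution your proof goes through.
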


When $k=1$, condition of type (0) means that 
$G$ itself is of polynomial size, that of type
(1) means that $\sigma$ itself 
has polynomially small order, that of type (2) means that 
$G$ is solvable. The standard descriptions of 
simple groups of Lie type define them as factors of certain
matrix groups over finite field. The quotient is taken to be the center
of the matrix group, so the simple group has a representation as a matrix group
by the conjugation action on the matrix algebra spanned by the 
covering matrix group. Also,
the outer automorphism group of a finite simple group is of
polynomial size. Therefore, these groups are covered by conditions
of type (3). 

The algorithm for polynomially small groups is the
straightforward trial and error. In the first three subsections of this section we
give efficient algorithms for groups/automorphisms satisfying conditions (1),
(2), or (3). In the fourth subsection we show how to use these ingredients
and Theorem~\ref{thm:recurs} to prove Theorem~\ref{thm:main}. 

Note that the order of $\sigma$ can be computed in quantum 
polynomial time using Shor's period finding method applied to
the functions $t\mapsto \sigma^t(x_i)$ for the generators $x_i$
of the group $G$ and taking the least common multiple of these periods.
The order can be factorized using Shor's factoring
algorithm. The length of the orbit $\{\rho_{(g,1)^t}(1_G):t\in \Z\}$
can be determined and factorized in a similar way. Based on
these observations, in the algorithms below
we assume that these numbers are already computed and factorized.
The solution set is either empty or the the residue class 
of an arbitrary solution
modulo the period. So it is sufficient to find any solution,
e.g., the smallest non-negative one.

\subsection{The SDLP for small order automorphisms}
\label{subsec:smallorderaut}

In this subsection we prove the following result.

\begin{proposition}
\label{prop:smallorderaut}
Let $G$ be a black-box group with unique labeling. Then
SDLP$(G,\sigma)$ can be solved by a quantum algorithm 
in time polynomial in the order of $\sigma$ and the 
length of the code-words together with the labels
of the group elements.
\end{proposition}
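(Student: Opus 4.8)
The plan is to reduce the problem to ordinary discrete logarithm computations by exploiting the small order of $\sigma$. Write $n=\ord(\sigma)$; by hypothesis this is polynomially bounded, and (as noted in the text) it can be computed by Shor's period-finding method applied to $t\mapsto\sigma^t(x_i)$ on generators. The key observation is that, once we restrict attention to exponents in a fixed residue class modulo $n$, the map $t\mapsto\rho_{(g,1)^t}(1_G)$ becomes essentially linear in the following sense. First I would compute the element $u=\rho_{(g,1)^n}(1_G)=\prod_{i=0}^{n-1}\sigma^i(g)$, and observe that $u$ is $\sigma$-invariant: indeed $\sigma(u)=\prod_{i=1}^{n}\sigma^i(g)=\prod_{i=0}^{n-1}\sigma^i(g)=u$ since $\sigma^n=\Id$ and the product telescopes (here one must check that the product is genuinely unchanged — this uses $\sigma^n(g)=g$ so the factor $\sigma^0(g)=g$ is reinstated at the top). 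In particular $u$ is central in the subgroup $\langle g,\sigma(g),\ldots,\sigma^{n-1}(g)\rangle$? — not quite; rather, what we need is only that $u$ commutes with the relevant tail elements, which follows because $\rho_{(g,1)^{n+s}}(1_G)=u\cdot\sigma^n(\rho_{(g,1)^s}(1_G))=u\cdot\rho_{(g,1)^s}(1_G)$, using the semidirect-product exponentiation formula and $\sigma^n=\Id$. Hence $\rho_{(g,1)^{s+nt}}(1_G)=u^t\cdot\rho_{(g,1)^s}(1_G)$ for all $s,t\geq 0$.

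With this identity in hand, the algorithm proceeds as follows. For each $s=0,1,\ldots,n-1$, we ask whether $h$ lies in the coset-like set $\{u^t\cdot\rho_{(g,1)^s}(1_G):t\in\Z_{\geq 0}\}$; equivalently, setting $h_s=\rho_{(g,1)^s}(1_G)^{-1}\cdot h$ (using the group structure of $G$ and the unique labeling to compute inverses and test membership), we must decide whether $h_s\in\langle u\rangle$ and, if so, find the discrete logarithm $t$ with $u^t=h_s$. The latter is exactly an instance of the discrete logarithm problem in the cyclic group $\langle u\rangle\leq G$, which Shor's algorithm solves in quantum polynomial time (in the code-word/label length); the order of $u$ is likewise obtained by Shor's period finding. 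Running over all $n$ values of $s$ and taking the smallest resulting exponent $s+nt$ yields the smallest non-negative solution; as the text notes, the full solution set is then the residue class of this value modulo the period of the orbit, which we have also computed. Since $n$ is polynomially bounded, this is a polynomial number of polynomial-time quantum subroutines.

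The one subtle point — and the step I would be most careful about — is the boundary behavior when the orbit has a nontrivial tail, i.e., when $\sigma$ is an automorphism but we have not yet passed to the group case; however, in the setting of this proposition $\sigma\in\Aut(G)$ is not assumed, so in general the orbit of $\rho_{(g,1)}$ starting at $1_G$ may have an index (tail length) up to $\lceil\log|G|\rceil$. The identity $\rho_{(g,1)^{s+nt}}(1_G)=u^t\rho_{(g,1)^s}(1_G)$ is nonetheless valid for all $s\geq 0$ and $t\geq 0$ directly from the exponentiation formula, so no separate tail analysis is needed: every non-negative integer can be written uniquely as $s+nt$ with $0\leq s<n$, and the search above is exhaustive over all of $\Z_{\geq 0}$. (If one prefers to work in the general semigroup formulation with the equation $h=\rho_{(g,1)^{t-1}}(g)$, the same decomposition applies after a shift of index by one.) Thus the main obstacle is merely bookkeeping — verifying the invariance $\sigma(u)=u$ and the commutation identity carefully — after which the result follows by assembling standard quantum number-theoretic primitives.
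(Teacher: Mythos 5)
Your approach is essentially the paper's: the paper proves this proposition by invoking Proposition~\ref{prop:topowerofaut} with $k=n=\ord(\sigma)$, which reduces the problem to $n$ instances of SDLP$(G,\sigma^n)=\mathrm{SDLP}(G,\Id)$, and the latter is exactly the ordinary discrete logarithm $h'=(g')^t$ with $g'=\rho_{(g,1)^n}(1_G)$; you have simply unrolled that reduction by hand. The algorithm you describe is correct in outline, but two of your supporting claims are false as stated. First, $\sigma(u)\neq u$ in general: writing $u=g\,\sigma(g)\cdots\sigma^{n-1}(g)$ and using $\sigma^n(g)=g$, one gets $\sigma(u)=\sigma(g)\cdots\sigma^{n-1}(g)\,g=g^{-1}ug$, a conjugate of $u$ (e.g.\ $G=S_3$, $\sigma$ conjugation by a $3$-cycle, $g$ a transposition gives $\sigma(u)\neq u$). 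What is true, and all that your key identity needs, is $\sigma^{nt}(u)=u$ and $\sigma^{nt}(v_s)=v_s$, which follow from $\sigma^n=\Id$; with these, $(g,1)^{nt+s}=(u^t,nt)(v_s,s)=(u^tv_s,nt+s)$ is valid. Second, since $u$ and $v_s=\rho_{(g,1)^s}(1_G)$ need not commute (one only has $u^tv_s=v_s\sigma^s(u)^t$), your equation $v_s^{-1}h=u^t$ is wrong; from $h=u^tv_s$ you must right-divide and solve $u^t=h\,v_s^{-1}$ (equivalently solve $v_s^{-1}h=\sigma^s(u)^t$ to the base $\sigma^s(u)$, which is what the paper's $\rho_{(g,1)^{-s}}(h)$ amounts to). With these corrections the argument goes through.

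One further caution: your closing paragraph suggesting that no tail analysis is needed even when $\sigma$ is not an automorphism is misleading. The identity $\rho_{(g,1)^{s+nt}}(1_G)=u^tv_s$ genuinely uses $\sigma^n=\Id$, and ``the order of $\sigma$'' is only meaningful for $\sigma\in\Aut(G)$; this proposition belongs to the group case (Section~\ref{sec3}), and the non-automorphism situation is handled separately by the reduction of Theorem~\ref{thm:toautocase}.
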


\begin{proof}
By Proposition~\ref{prop:topowerofaut}, it is sufficient to
prove the case when $\sigma$ is trivial. Then
$\rho_{(g,1)}(x)=gx$, whence $\rho_{(g,1)^t}(1_G)=g^t$ for every integer $t$.
Thus, solving the SDLP for $g$ and $h$ is the same as computing the 
base-$g$ discrete logarithm of $h$, which can be accomplished by
Shor's algorithm.
\end{proof}

\subsection{The SDLP in solvable groups}

In this part, we first present a quantum algorithm for 
the SDLP on elementary abelian groups. We then show
how Theorem~\ref{thm:recurs} can be used to reduce
the general solvable case to instances of 
the elementary abelian case. 

\begin{theorem}
Let $G=\Z_p^d$, the (additive) group of column vectors
of length $d$ over the integers modulo $p$, where $p$
is a prime number and let $\sigma$ be an automorphism
of $G$, given as a $d$ by $d$ non-singular matrix.
Then SDLP$(G,\sigma)$ can be solved by a quantum algorithm
in time polynomial
in $\log p$ and $d$.
\end{theorem}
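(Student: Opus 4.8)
The plan is to reduce the problem, in polynomial time, to a single instance of the ordinary discrete logarithm problem in the unit group of a finite commutative $\F_p$-algebra of dimension at most $d+1$, to which Shor's algorithm then applies. Writing $A\in\GL_d(\F_p)$ for the matrix of $\sigma$, the task is to find the non-negative integers $t$ with $\sum_{i=0}^{t-1}A^i g=h$; abbreviate $S_t:=I+A+\cdots+A^{t-1}$, so we must solve $S_t g=h$. The naive approach would use the geometric-series identity $S_t=(A^t-I)(A-I)^{-1}$, but this breaks down precisely when $1$ is an eigenvalue of $A$, and sidestepping that degeneracy is the main point of the argument. I would do it by linearising: put
\[
B:=\begin{pmatrix} A & g\\ 0 & 1\end{pmatrix}\in\GL_{d+1}(\F_p),\qquad v_0:=e_{d+1},\qquad v_1:=\begin{pmatrix} h\\ 1\end{pmatrix}.
\]
A one-line induction gives $B^t v_0=\bigl(\begin{smallmatrix} S_t g\\ 1\end{smallmatrix}\bigr)$ for all $t\ge 0$, so the solution set of SDLP$(G,\sigma)$ for $g,h$ is exactly $\{t\ge 0:B^t v_0=v_1\}$. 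Note $\det B=\det A\neq 0$, consistent with the fact that the affine map $x\mapsto g+Ax$ is a bijection, so the orbit of $1_G$ is purely periodic.

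Second, I would convert $\{t:B^t v_0=v_1\}$ into a discrete logarithm. By linear algebra over $\F_p$ one computes the order polynomial $m(x)\in\F_p[x]$ of $v_0$ with respect to $B$, i.e.\ the monic generator of the ideal $\{q:q(B)v_0=0\}$, of degree $s\le d+1$; this costs $\mathrm{poly}(\log p,d)$. The evaluation map $q\mapsto q(B)v_0$ induces an isomorphism of $\F_p[x]$-modules $R:=\F_p[x]/(m(x))\xrightarrow{\ \sim\ }U:=\F_p[B]v_0$ carrying $1$ to $v_0$ and multiplication by $x$ to $B|_U$; since $B$ is invertible, $x$ is a unit in the finite commutative ring $R$. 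If $v_1\notin U$ (checkable by linear algebra) there is no solution; otherwise write $v_1=\sum_{i=0}^{s-1}c_i B^i v_0$ and set $w:=\sum_{i=0}^{s-1}c_i x^i\in R$. Under the isomorphism, $B^t v_0=v_1$ translates to $x^t=w$ in $R$. Hence the SDLP is solved once we decide whether $w$ lies in the cyclic group $\langle x\rangle\le R^\times$ and, if so, produce one $t_0$ with $x^{t_0}=w$; the full solution set is then the residue class of $t_0$ modulo the order $r$ of $x$.

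Finally, I would invoke the standard quantum ingredients. Elements of $R$ are polynomials of degree $<s\le d+1$, so ring arithmetic costs $\mathrm{poly}(\log p,d)$ and $|R^\times|<p^{d+1}$. Shor's period-finding applied to $t\mapsto x^t\bmod m(x)$ yields the order $r$ of $x$; Shor's discrete logarithm algorithm, run in the black-box abelian group $\langle x\rangle$, either certifies $w\notin\langle x\rangle$ (no solution) or returns $t_0\in\{0,\dots,r-1\}$ with $x^{t_0}=w$. All steps run in time polynomial in $\log p$ and $d$.

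The step I expect to be the crux is the passage in the second paragraph, and it is essentially a conceptual rather than computational difficulty: one must represent $v_0$ through its own order polynomial (so that it becomes the identity of $R$, rather than working with the minimal polynomial of $B$ and cosets of an annihilator), and one must arrange that the element being inverted is the unit $x$ and never the possibly-singular $A-I$. The linearisation by $B$ is exactly what delivers both, and it is also why the argument is completely insensitive to whether $1$ is an eigenvalue of $A$. An alternative route—decomposing $R$ by the Chinese Remainder Theorem into the part on which $x-1$ is invertible (where the geometric-series formula does apply) and the local part at $x=1$ (where $A$ acts as $1+{}$nilpotent and one has to match binomial-coefficient congruences $\binom{t}{j}$)—reaches the same conclusion but is messier, so I would present the linearisation.
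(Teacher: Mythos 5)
Your proof is correct, but it takes a genuinely different route from the paper's. The paper first reduces to the case where $G$ has no proper nontrivial $\sigma$-invariant subspace (computing a flag of invariant subspaces via the rational Jordan form and invoking the recursion theorem), then splits on whether $1$ is an eigenvalue of $\sigma$: the degenerate case forces $d=1$ and is solved by a modular inversion, while in the generic case the geometric-series identity $(A^t-I)(A-I)^{-1}g$ turns the problem into a single discrete logarithm in the field $\F_{p^d}\cong\F_p[A]$. You avoid both the invariant-subspace recursion and the case split by linearising the affine recurrence $x\mapsto g+Ax$ with the $(d+1)\times(d+1)$ block matrix $B$, and then running (a cyclic-vector version of) the Kannan--Lipton reduction of the Orbit Problem to the Matrix Power Problem, landing in one discrete logarithm in the unit group of $\F_p[x]/(m(x))$ --- a finite commutative ring rather than a field. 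This is in fact essentially the argument the paper deploys later for the matrix-group case (Theorem~\ref{thm:matrix-inner}), specialised to the abelian setting; your observation that passing to the order polynomial of $v_0$ (rather than the minimal polynomial of $B$) makes $x$ a unit because $m(0)\neq 0$ is exactly what makes the eigenvalue-$1$ degeneracy invisible. What each approach buys: yours is uniform, needs no invariant-subspace computation, and produces a single DL instance; the paper's lands in an honest finite field and exercises the recursion tool (Theorem~\ref{thm:recurs}) that is reused throughout the paper. Both correctly reduce to standard quantum primitives (order finding plus discrete logarithm in a cyclic group with efficient arithmetic, which also certifies non-membership), so your argument is a valid alternative proof.
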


\begin{proof}
We consider $G$ as a vector space of dimension $d$ over
the finite field $\Z_p$. We take a minimal nontrivial 
$\sigma$-invariant subspace $M$ of $G$. This can be done,
e.g., by a classical randomized method
based on computing the rational Jordan normal form
of $\sigma$, see \cite{giesbrecht1995nearly}.
Then the factor space $M$ has no proper 
nontrivial $\sigma$-invariant subspace. Iterating this in $G/M$,
we eventually obtain a flag of subspaces $(0)=M_0<M_1<\ldots<M_k=G$
such that there is no $\sigma$-invariant subspace strictly
between $M_{i-1}$ and $M_i$. Then, by Theorem~\ref{thm:recurs},
the problem is reduced to the case when 
$G$ has no proper nontrivial $\sigma$-invariant subspace. 
Suppose that we have an instance of that case. 

If $1_G$ is an eigenvalue of $\sigma$ then $d=1$ and $\sigma$ is trivial. 
It follows that $\rho_{(g,1)^t}(1_G)=g^t$. 
Using the additive notation for $\Z/p\Z$, we need to solve $h=t\cdot g$. 
If $g=0$ and $h=0$ then every integer is a solution, while if $g=0$ and 
$h\neq 0$ then there is no solution. 
If $g\neq 0$ let $g'$ stand for the multiplicative inverse of 
$g$ in the field $\Z/p\Z$. Then the solutions are $\{hg'+tp:t\in \Z\}$. 
(Actually, the case when $\sigma$ is trivial is a special case 
of the broader case already discussed in Subsection~\ref{subsec:smallorderaut}.)

If $1_G$ is not an eigenvalue then we do the following. We compute the matrix $B$ of $\sigma$ in the standard basis of $(\Z/p\Z)^d$. Then, using again the additive notation, we can write $\rho_{(g,1)^t}(1_G)$ as \[\sum_{j=0}^{t-1}B^jg=(B^t-I)(B-I)^{-1}g.\]
Let $\cal B$ be the matrix algebra generated by $B$. Then $\cal B$ is isomorphic to the field $\F_{p^d}$ and the action of $B$ on $(\Z/p\Z)^d$ can be identified with the multiplication by the field element $B$ on the additive group of the field. Thus we can view $g$ and $B$ as field elements. The solutions of the SDLP can be obtained by solving a discrete log problem in this field by computing 
base-$B$ logarithm of $I+(B-I)g^{-1}h$. This can be done by Shor's quantum
algorithm.
\end{proof}

The method for the elementary abelian case, in combination with the recursion 
tool (Theorem~\ref{thm:recurs}), gives an efficient quantum algorithm for solving the SDLP in solvable 
groups. More precisely, we obtain the following result.

\begin{theorem}
\label{thm:solvable}
Assume that $G$ is a solvable black-box group with unique labeling.
Then SDLP$(G,\sigma)$ can be solved in quantum polynomial time.
\end{theorem}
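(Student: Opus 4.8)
The plan is to reduce the solvable case to the elementary abelian case already handled, using the recursion tool of Theorem~\ref{thm:recurs} along a suitably chosen normal series. The key point is that for a solvable black-box group $G$ with unique labeling, the structure-finding algorithm of \cite[Theorem~7]{ivanyos2001efficient} produces (in quantum polynomial time) a composition-like series with elementary abelian factors, together with the data needed to compute inside the factors. First I would run that algorithm to obtain a chain $1=N_0<N_1<\ldots<N_m=G$ of normal subgroups of $G$ with each $N_j/N_{j-1}$ elementary abelian, along with effective homomorphisms onto the factors. The obstacle is that this series need not consist of $\sigma$-invariant subgroups, which is what Theorem~\ref{thm:recurs} requires.

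To fix this, I would refine the series to a $\sigma$-invariant one. The standard trick is to intersect: given a normal subgroup $N\lhd G$, the subgroups $\bigcap_{i=0}^{t}\sigma^i(N)$ stabilize after polynomially many steps (bounded by $\log|G|$) to a $\sigma$-invariant normal subgroup $N^\ast=\bigcap_{i\ge 0}\sigma^i(N)$, which can be computed since $\sigma^i$ is evaluable and membership/labeling is available. Applying this to each $N_j$ and then taking all pairwise products and intersections of the resulting $\sigma$-invariant subgroups, one refines the original chain to a chain $1=M_0<M_1<\ldots<M_k=G$ of $\sigma$-invariant normal subgroups of $G$ whose factors $M_i/M_{i-1}$ are still abelian of prime exponent (sections of elementary abelian groups), hence elementary abelian; the length $k$ remains polynomially bounded. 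For each $i$ we obtain an effective homomorphism $\psi_i:M_i\to\overline G_i$ with kernel $M_{i-1}$ and $\overline G_i\cong M_i/M_{i-1}$ presented as $\Z_{p_i}^{d_i}$ by column vectors, so the induced automorphism $\overline\sigma_i$ is just a matrix, which we can extract by evaluating $\sigma$ on preimages of basis vectors.

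With this $\sigma$-invariant elementary abelian refinement in hand, I would apply Theorem~\ref{thm:recurs} inductively. At the top level it reduces SDLP$(G,\sigma)$ to SDLP$(\mbox{Im}(\psi_k),\overline\sigma_k)$ — an elementary abelian instance, solved by the previous theorem — and SDLP$(M_{k-1},\sigma_{|M_{k-1}}^{n_0})$ for some polynomially bounded $n_0$ (its size is a divisor of the period, hence computable). Note the exponent $n_0$ changes, but $\sigma^{n_0}$ restricted to $M_{k-1}$ is still an automorphism leaving each $M_i$ invariant, so the recursion hypothesis is maintained; unwinding gives a recursion tree of depth $k$, each node costing one elementary abelian SDLP plus polynomial bookkeeping. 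Since $k=\poly$ and each elementary abelian instance is solved in quantum polynomial time, the total cost is quantum polynomial time, proving the theorem.

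The main obstacle, as indicated above, is the passage from an arbitrary computable elementary abelian series to a $\sigma$-invariant one of comparable length while keeping the factors elementary abelian and the homomorphisms onto them effectively computable; the intersection-stabilization argument handles the invariance, and closure under products and intersections handles the refinement, but one should check that the number of refinement steps — and hence $k$ — stays polynomial, which follows because each refinement step strictly increases the length of a subnormal chain in $G$, and such chains have length at most $\log|G|$.
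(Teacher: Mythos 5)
Your proposal is correct and follows essentially the same route as the paper: compute an elementary abelian series via \cite[Theorem~7]{ivanyos2001efficient}, force $\sigma$-invariance by stabilizing the intersections $\bigcap_i\sigma^i(N)$, and recurse with Theorem~\ref{thm:recurs} down to elementary abelian instances. The only real difference is organizational: the paper handles one layer at a time and obtains both the $\sigma$-invariant kernel and the explicit presentation of the quotient as a subspace of $\Z_p^\ell$ simultaneously from the single map $x\mapsto(\phi(x),\phi(\sigma(x)),\ldots,\phi(\sigma^{\ell-1}(x)))$, whereas you precompute the whole invariant chain and assert the effective homomorphisms onto the factors separately (also, the bound on $n_0$ is unnecessary --- one only needs to evaluate $\sigma^{n_0 t}$ via the oracle, not that $n_0$ be polynomially small).
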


\begin{proof}

Using the labeling,
by \cite[Theorem~7]{ivanyos2001efficient} 
which is based on the Beals-Babai algorithm \cite{babai1999polynomial}, 
we can compute a composition series of $G$ with explicit isomorphisms
between the composition factors and additive groups $\Z_p$ for various 
primes $p$. In particular, we obtain a maximal normal subgroup $N$
of $G$ together with a homomorphism $\phi:G\mapsto \Z_p$. For any
positive integer $j$, let $N_j=\cap_{i=0}^{j-1}\sigma^{i}(N)$. Note
that $N_{j+1}=N_j\cap \sigma^{j}(N)$ and if $N_{j+1}=N_j$ then
$N_{j'}=N_j$ for any integer $j'>j$ and $N_j$ is $\sigma$-invariant. 
This equality happens for an integer $j$ bounded by the length 
$\ell$ of code-words for
the group elements. We compute the map $\psi:G\mapsto \Z_p^\ell$
defined as $x\mapsto
(\phi(x),\phi^{\sigma}(x),\ldots,\phi^{\sigma^{\ell-1}}(x))^T$.
Based on the above discussion, the kernel $M$ of $\psi$
is $\sigma$-invariant. The image $\mbox{Im}(\psi)$ is 
a subspace $V$ of $\Z_p^\ell$. Compute a basis for $V$
by taking a maximal linearly independent set of the 
images of the generators for $G$ under the map $\psi$
and using them replace $\psi$ with the composition
of $\psi$ with the transpose of the matrix whose
columns are the bases elements for $V$. 
This
new map, denoted again by $\psi$, is a surjective homomorphism
from $G$ to $\Z_p^d$ with kernel $M$. Then, by Theorem~\ref{thm:recurs},
after solving the SDLP in the $\psi$-image $\Z_p^d$, 
SDLP$(G,\sigma)$
gets reduced to SDLP$(M,\sigma')$ where $\sigma'$ is the restriction
of a power of $\sigma$ to $M$. 
\end{proof}

\subsection{The SDLP in matrix groups with an inner automorphism}

In this part we prove the following result.

\begin{theorem}
\label{thm:matrix-inner}
Let $G$ be a subgroup of $\GL_d(\F_q)$ where $d$ is a positive integer
and $q$ is a power of a prime. Assume that $G$ is given by a list
of matrices that generate $G$ and that the automorphism $\sigma$ 
is given on the generators. Suppose that $\sigma$ coincides with 
the conjugation action of a matrix $a\in \GL_d(\F_q)$. Then
SDLP$(G,\sigma)$ can be solved by a quantum algorithm in time
polynomial in $d$ and $\log q$.
\end{theorem}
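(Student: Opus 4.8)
The plan is to rewrite the orbit element
\[
P_t\;:=\;\rho_{(g,1)^t}(1_G)\;=\;\prod_{i=0}^{t-1}\sigma^i(g)
\]
as the image of a fixed vector under the $t$-th power of a single invertible \emph{linear} map, which turns SDLP$(G,\sigma)$ into a discrete logarithm problem in a finite commutative $\F_q$-algebra. First I would, if necessary, recover a conjugating matrix: the conditions $a\,x_j=\sigma(x_j)\,a$ over a generating set $\{x_j\}$ of $G$ form a homogeneous linear system in the $d^2$ entries of $a$, whose solution space is guaranteed to contain an invertible matrix, and one such is found by a routine (randomized) search. Writing $\sigma(x)=axa^{-1}$ (the opposite convention is symmetric), we have $\sigma^i(g)=a^iga^{-i}$; setting $b:=ga$ and $c:=a^{-1}$, a one-line induction on $t$ gives $b^t=P_ta^t$, hence
\[
P_t\;=\;b^tc^t,\qquad P_0=1_G=I,\qquad P_{t+1}=b\,P_t\,c .
\]

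The key point is that $X\mapsto bXc$ is an $\F_q$-linear automorphism $L$ of $V:=\M_d(\F_q)$ and $P_t=L^t(I)$, so we must find all $t\ge 0$ with $L^t(v_0)=v_1$, where $v_0=I$ and $v_1=h$, regarded as vectors in $V\cong\F_q^{d^2}$ and $L\in\GL(V)$ --- this is the same mechanism as in the elementary abelian case treated above, now applied to the conjugation-twisted action of $\sigma$ on the matrix algebra. Compute the $L$-cyclic subspace $W=\langle v_0,Lv_0,L^2v_0,\dots\rangle$ and the order polynomial $m\in\F_q[x]$ of $v_0$ (the monic generator of $\{p:p(L)v_0=0\}$) by linear algebra; this identifies $W$ with $R:=\F_q[x]/(m(x))$ so that $v_0\leftrightarrow 1$ and $L|_W\leftrightarrow$ multiplication by $x$. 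Since $b$ and $c$ are invertible, $L$ is invertible on $W$, so $x$ is a unit of $R$; if $v_1\notin W$ or $v_1$ corresponds to a non-unit of $R$ then there is no solution, and otherwise $v_1\leftrightarrow u\in R^\times$ and it remains to solve $x^t=u$ in the finite abelian group $R^\times$. This is exactly a discrete logarithm problem, handled by Shor's algorithm (applied component-wise after factoring $m$ into prime powers over $\F_q[x]$, or after computing the structure of $R^\times$); it outputs $t$ modulo $\ord(x)$ --- which is precisely the period of the orbit $(P_t)$ --- or certifies $u\notin\langle x\rangle$. A final check $L^t(v_0)=v_1$ disposes of the no-solution case, and every step costs $\poly(d,\log q)$.

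The only mildly delicate step is the reduction itself, i.e.\ massaging $\prod_i\sigma^i(g)$ into the form $P_t=L^t(I)$ and keeping the two conjugation conventions straight; after that the argument is routine linear algebra over $\F_q$ together with a black-box invocation of Shor's discrete logarithm algorithm in a finite commutative ring, and I do not anticipate a genuine obstacle beyond fixing the telescoping identity precisely.
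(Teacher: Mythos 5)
Your proposal is correct and follows essentially the same route as the paper: both linearize $\rho_{(g,1)}$ to the invertible map $X\mapsto g\sigma(X)=(ga)Xa^{-1}$ on $\M_d(\F_q)$, restrict to the cyclic subspace generated by $I$, and finish with Shor's discrete-logarithm algorithm (the paper phrases the last step via the Kannan--Lipton reduction to the matrix power problem, you as a DLP in the unit group of $\F_q[x]/(m(x))$; these are equivalent, and the paper's closing remark about Jordan blocks is exactly your field-theoretic view). The one detail to tighten is the ``routine (randomized) search'' for an invertible solution of $a\,x_j=\sigma(x_j)\,a$: the solution space is a coset of the centralizer algebra of $G$, whose proportion of invertible elements can be exponentially small when $q$ is tiny, which is why the paper first extends scalars so that $q\ge 2d$ before invoking Schwartz--Zippel.
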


The matrix $a$ that implements the automorphism $\sigma$
does not need to be given, such a matrix is computed by the algorithm.
(It is unique up to the centralizer of $G$.) Note that conjugation by 
$a$ is an inner automorphism of the full matrix group $\GL_d(\F_q)$ (or
just of the matrix group generated by $G$ and $a$), justifying
the title of the subsection.

\begin{proof}
We assume that $q\geq 2d$. (If not, we consider $G$ as a matrix group
over an extension field of $\F_q$ having at least $2d$ elements.)
To find a matrix $a$ with the desired property, we take the linear 
space of matrices $y$  such that $yx_i=\sigma(x_i)y$ for the generators 
$x_i$ of $G$, and choose a random element $a$ of this space. 
Since $q\geq 2d$, by the Schwartz-Zippel lemma \cite{Schwartz,Zippel},
a random element of this matrix space will be with high probability
invertible as it contains at least one by the assumption of
the theorem. Conjugation by $a$ extends $\sigma$ to a linear automorphism
of the full matrix algebra $\mathcal{B}=\mbox{M}_d(\F_q)$ of the $d$ by $d$ matrices.
 We denote this extension also by $\sigma$.

We have $\rho_{(g,1)}(x)=g\sigma(x)$, thus $\rho_{(g,1)}=\mu_g\circ \sigma$, 
where $\mu_g$ denotes the multiplication by $g$ from the left. 
The map $\mu_g$ can also be extended to an invertible linear transformation 
of $\cal B$. Therefore the composition $\rho_{(g,1)}$ has an invertible linear
extension $\Phi$ to $\cal B$.
Also, solving $h=\rho_{(g,1)}^t(1_G)$ is equivalent to solving $h=\Phi^t I_d$. 
The latter is an instance of the well known {\em Orbit Problem} 
introduced by Harrison in \cite{harrison1969lectures}. 
It is the following orbit membership problem. 
Given vectors $a,b$ of a finite dimensional vector space $V$ 
over the field $\F$ and a linear transformation 
$\Phi\in \End_{\F}(V)$, find $t\in \Z_{\geq 0}$, if there exists, 
such that $b=\Phi^t a$. 

Kannan and Lipton in \cite{kannan1986polynomial} gave a polynomial
time solution of the Orbit Problem for the case when $\F$ is the 
field of rationals. Here we need to solve the finite field
case. Kannan and Lipton gave a construction to reduce the
Orbit Problem to the so-called {\em Matrix Power Problem}, 
which is the following. Given square matrices $A$ and $B$ 
over a field $\F$, solve $B=A^t$, see \cite[Theorem 1]{kannan1986polynomial}. 
For completeness we briefly recall (a version of) their construction. 
We compute the subspace $W$ spanned by $\Phi^t a$ ($t=0,1,\ldots$).
This can be done by computing the vectors $a,\Phi a, \ldots, \Phi^{j-1} a$
until 
$\Phi^j a$ becomes linearly dependent of the previous vectors. 
Then $W$ is the subspace with basis $a,\Phi a, \ldots ,\Phi^{j-1} a$.
If $b\not\in W$, then the problem has no solution. Otherwise
$\Phi^t b\in W$ for every $t$. Write the vectors $\Phi^i a$ and $\Phi^i b$
($i=0,\ldots,j-1$) as column vectors in terms of a basis of $W$. Let
$A$ be the matrix of the restriction of $\Phi$ to $W$ in the same basis
and let $C$ resp.~$D$
be the $j$ by $j$ matrices whose columns are $a,\Phi a, \ldots, \Phi^{j-1} a$
and $b,\Phi b, \ldots, \Phi^{j-1} b$, respectively. Then
$b=A^t a$ if and only if $D=A^t C$. Let $B=DC^{-1}$ and 
we need to solve $B=A^t$. If $\Phi$ is invertible then so is $A$.

The invertible case of the matrix power problem over a finite field can be 
solved by Shor's quantum discrete log algorithm. 
\end{proof}

We remark that, using the Jordan blocks of $A$, one could
classically reduce the 
problem to the instances of the discrete logarithm problem 
in the multiplicative group of extensions of $\F$. 
Also,
in practice it might be worth replacing $\cal B$ with the
matrix algebra spanned by the elements of $G$.

Proposition~\ref{prop:topowerofaut} gives the following 
extension.

\begin{corollary}
\label{cor:matrix-powerinner}
Let $G$ be as in Theorem~\ref{thm:matrix-inner}. Let $\sigma$
be an automorphism of $G$. Let $K$ be a positive integer.
We assume that for the divisors $k\leq K$ of the order of $\sigma$,
the action of $\sigma^k$ on the generators for $G$ is also given
and that among those divisors $k$, $\sigma^k$ coincides with
the conjugation action of a matrix. Then SDLP$(G,\sigma)$ can
be solved by a quantum algorithm in time polynomial
in $K,d$ and $\log q$.
\end{corollary}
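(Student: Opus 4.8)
The plan is to reduce the problem to Theorem~\ref{thm:matrix-inner} by \emph{searching} for a suitable exponent $k$ among the divisors of $\ord(\sigma)$, and then to climb back from $\sigma^k$ to $\sigma$ via Proposition~\ref{prop:topowerofaut}. The only genuinely new ingredient relative to the two earlier results is the search step.

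First I would compute $\ord(\sigma)$ and its prime factorization as described in the paragraph preceding Subsection~\ref{subsec:smallorderaut}: apply Shor's period finding to the functions $t\mapsto\sigma^t(x_i)$ for the generators $x_1,\dots,x_m$ of $G$, take the least common multiple, and factor the result with Shor's factoring algorithm. From the factorization I would list all divisors $k$ of $\ord(\sigma)$ with $k\le K$; there are at most $K$ of them. For each such $k$, using the given action of $\sigma^k$ on the generators, I would form the $\F_q$-linear space
\[
\mathcal{Y}_k=\{\, y\in \M_d(\F_q) : y\,x_i=\sigma^k(x_i)\,y \text{ for } i=1,\dots,m \,\},
\]
passing to an extension field with at least $2d$ elements if $q<2d$, exactly as in the proof of Theorem~\ref{thm:matrix-inner}. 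I would then test whether $\mathcal{Y}_k$ contains an invertible matrix by sampling a few random elements of $\mathcal{Y}_k$ and checking invertibility. Note that $0\in\mathcal{Y}_k$ always, so nonemptiness of $\mathcal{Y}_k$ is not the right criterion; the point is that if $\mathcal{Y}_k$ contains \emph{some} invertible element — equivalently, if $\sigma^k$ is conjugation by a matrix on $G$ — then by the Schwartz--Zippel lemma a random element of $\mathcal{Y}_k$ is invertible with high probability, whereas if no invertible element exists no sample can be invertible. By hypothesis at least one divisor $k\le K$ of $\ord(\sigma)$ passes this test, so the search succeeds and returns such a $k$ together with a matrix $a$ with $a\,x_i\,a^{-1}=\sigma^k(x_i)$ for all $i$, hence $\sigma^k(x)=a x a^{-1}$ for all $x\in G$.

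With such $k$ and $a$ in hand, SDLP$(G,\sigma^k)$ meets the hypotheses of Theorem~\ref{thm:matrix-inner} and is solvable by a quantum algorithm in time polynomial in $d$ and $\log q$ (over the possibly extended field, whose logarithmic size is still $O(\log q+\log d)$). Finally, Proposition~\ref{prop:topowerofaut} reduces SDLP$(G,\sigma)$ to $k\le K$ instances of SDLP$(G,\sigma^k)$: for each $s=0,\dots,k-1$ one sets $g'=\rho_{(g,1)^k}(1_G)$ and $h'=\rho_{(g,1)^{-s}}(h)$, solves SDLP$(G,\sigma^k)$ for $g'$ and $h'$ by Theorem~\ref{thm:matrix-inner}, and assembles the solution set of the original instance from the numbers of the form $s+tk$. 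The total cost is $O(K)$ rounds, each consisting of linear algebra over $\F_q$ (or its small extension) and one application of Theorem~\ref{thm:matrix-inner}, which is polynomial in $K$, $d$, and $\log q$.

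I expect the only step needing care to be the search: one must confirm that detecting the existence of an invertible element of $\mathcal{Y}_k$ is efficient and that exactly the divisors for which $\sigma^k$ is inner get flagged. Both follow from the Schwartz--Zippel argument already employed in Theorem~\ref{thm:matrix-inner}, with the randomized invertibility check (rather than a nonemptiness check) as the correct test. The remainder is a routine composition of Proposition~\ref{prop:topowerofaut} and Theorem~\ref{thm:matrix-inner}, with the polynomial bound on $K$ guaranteeing that the outer loop stays polynomial.
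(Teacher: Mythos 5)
Your proposal is correct and follows essentially the route the paper intends: the paper states the corollary as an immediate consequence of Proposition~\ref{prop:topowerofaut} combined with Theorem~\ref{thm:matrix-inner}, and your write-up simply makes explicit the search over the at most $K$ divisors $k\le K$ of $\ord(\sigma)$ using the same Schwartz--Zippel invertibility test already used in the proof of Theorem~\ref{thm:matrix-inner}. No gaps.
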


\subsection{Putting things together}

Our recursion tool (Theorem~\ref{thm:recurs}) can assemble
the results proved in the preceding subsections for various 
special cases of the SDLP to obtain Theorem~\ref{thm:main}.

\begin{proof}[Proof of Theorem~\ref{thm:main}]
Assume that we have the chain of subgroups $M_i$ and 
homomorphisms $\psi_i$ ($i=0,\ldots,k)$ with
properties as in the statement of the theorem. For $i=k$ to
$1$, using Theorem~\ref{thm:recurs}, 
by solving the SDLP in the $\phi_i$-image of $M_i$
we reduce the problem to an instance in $M_{i-1}$. In the
small size case (0), we use brute force. When ${\overline\sigma}_i$ is
of small order (case (1)) or when $\mbox{Im}(\psi_i)$ is solvable
(case (2)),
we use Proposition~\ref{prop:smallorderaut} or Theorem~\ref{thm:solvable},
respectively. In order to facilitate using the oracle for evaluating the powers
of $\sigma$ to evaluate those of ${\overline\sigma}_i$,
we use the pairs $(x,\psi_i(x))$ to encode the elements
of $\mbox{Im}(\psi_i)$, while as labeling we use the labeling
for ${\overline G_i}$. In the matrix group case (4), we use the 
natural encoding by matrices for the image. We compute the order
$o_i$ of ${\overline\sigma}_i$ using the factorization of the order
of $\sigma$ and compute $\sigma^t$ for the smallest few divisors
of $o_i$ and apply the method of Corollary~\ref{cor:matrix-powerinner}.
\end{proof}

\ifnum0=1
\section*{An equivalent form of the group case}\label{equiv}
Now we show that the group case SDLP can be equivalently described as a generalized discrete logarithm problem in any non-commutative group $G$ as follows.

Given $u,v,w\in G$, find all $t$ such that $w=u^{t}v^{-t}$. It will be convenient to state a refinement of the problem, where a normal subgroup $N\lhd G$ with $w,uv^{-1}\in N$
is also given (possibly $N=G$). This is obviously an instance of the
SDLP$(N,\sigma_v)$ for $g=uv^{-1}$ and $h=w$, where 
$\sigma_v$ is the automorphism of $N$ defined as
$\sigma_v:x\mapsto vxv^{-1}$. Conversely, consider an instance of  
SDLP$(N,\sigma)$ for $g,h\in N$ where $N$ is a finite group and
$\sigma$ is an automorphism of $N$. Let $n$ be the order of $\sigma$
and consider the the semidirect product $G=N\rtimes_\sigma \Z/n\Z$ where
$\sigma_{z}=\sigma^z$. Then the SDLP can be cast as an instance of the
problem above by putting $u=(g,1)$, $v=(1_N,0)$, and $w=(h,0)$. 
\fi

\paragraph*{Acknowledgments.}
The research of the second author was supported by the Hungarian 
Ministry of Innovation and Technology NRDI Office within the framework of the Artificial Intelligence National Laboratory Program.

\bibliographystyle{alpha}
\bibliography{myrefs}

\appendix

\section*{Appendix: the matrix group case in odd characteristic}

This part is devoted to a sketch of a proof of the following.

\begin{corollary}
Let $\psi:K\rightarrow M_d(\F_q)$ be a representation
of the black-box group $K$ with our without a labeling.
Assume that the automorphism $\sigma$ is given by
a black box to evaluate its powers on elements of $K$ and
that the kernel of $\psi$ is $\sigma$-invariant (e.g.,
when $\psi$ is faithful). Then, $SDLP(\mbox{Im}(\psi),\overline \sigma)$ 
can be solved in quantum polynomial time.
\end{corollary}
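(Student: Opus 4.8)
The plan is to peel $\mbox{Im}(\psi)$ apart, layer by layer, with the recursion of Theorem~\ref{thm:recurs}, until every remaining piece is of a kind already handled in Section~\ref{sec3}: either a solvable group (Theorem~\ref{thm:solvable}) or a matrix group on which the relevant power of the automorphism acts by conjugation (Theorem~\ref{thm:matrix-inner}, Corollary~\ref{cor:matrix-powerinner}). What makes this possible is the structure theory of finite matrix groups over finite fields together with its algorithmic incarnation. After the routine normalization replacing $\M_d(\F_q)$ by $e\M_d(\F_q)e$ for the idempotent $e=\psi(1_K)$, we may assume $G:=\mbox{Im}(\psi)\le\GL_{d'}(\F_q)$ with $d'\le d$, generated by the matrices $\psi(k_i)$ for a generating set $\{k_i\}$ of $K$; the induced automorphism $\overline\sigma$ and its powers are evaluated on $G$ by running straight-line programs in the generators against the images $\psi(\sigma^j(k_i))$ supplied by the given oracle (this is the ``pairs'' device of Subsection~\ref{recursion}), and the matrices themselves serve as a unique labeling.

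First I would invoke the composition-tree machinery for matrix groups over finite fields of odd characteristic: constructive recognition of the simple composition factors, computation of the solvable radical, and computation of the socle of a group with trivial solvable radical. Once equipped with Shor's algorithms for discrete logarithms and factoring in $\F_{q^j}^\times$ for polynomially bounded $j$ --- the only number-theoretic subroutines these algorithms use --- this whole package runs in quantum polynomial time, and it is the restriction to odd characteristic that makes clean polynomial bounds available. Iterating ``take the solvable radical, then the socle of the quotient'' produces a characteristic, hence $\overline\sigma$-invariant, normal series $1=N_0\lhd N_1\lhd\cdots\lhd N_\ell=G$ whose consecutive factors are alternately (i)~solvable and (ii)~isomorphic to a direct product $S_1\times\cdots\times S_m$ of non-abelian finite simple groups on which the induced automorphism acts by permuting the $S_j$ via some permutation $\pi$ of $\{1,\dots,m\}$, composed with coordinate isomorphisms; constructive recognition equips each $S_j$ with an explicit faithful matrix representation, and $m\le\log_{60}|G|$ is polynomially bounded.

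Factors of type (i) are disposed of by Theorem~\ref{thm:solvable} (condition~(2) of Theorem~\ref{thm:main}). The main obstacle is type (ii): such a factor $P=S_1\times\cdots\times S_m$ need \emph{not} satisfy condition~(3) of Theorem~\ref{thm:main}, because $\pi$ can have super-polynomial order --- iterated wreath products realize this already inside $\GL_d$ --- so the smallest power of $\overline\sigma|_P$ implementable by conjugation by a matrix is a multiple of $\ord(\pi)$, which need not be polynomially bounded. I would get around this by decomposing $P$ along the \emph{orbits} of $\pi$ on $\{1,\dots,m\}$: the product over a single $\pi$-orbit is an $\overline\sigma$-invariant direct factor of $P$ all of whose simple factors are copies of one group $S$ (hence realized over one field), and peeling these off one at a time with Theorem~\ref{thm:recurs} leaves, for an orbit of size $s\le m$, an instance SDLP$(S^s,\tau)$ in which $\tau$ permutes the $s$ copies by a power of an $s$-cycle and otherwise acts by automorphisms of $S$. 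Since $s$ is polynomially bounded, Proposition~\ref{prop:topowerofaut} with $k=s$ reduces this to SDLP$(S^s,\tau^s)$, where $\tau^s$ fixes each copy; a further $|\Out(S)|$-th power --- again polynomially bounded, as $|\Out(S)|$ is polynomial in $\log|S|$ for every finite simple group --- makes the automorphism inner on each copy, i.e.\ turns it into conjugation by an element of $S^s$, a block-diagonal matrix. That is the inner matrix-group case, solved by Theorem~\ref{thm:matrix-inner} and Corollary~\ref{cor:matrix-powerinner} with the polynomial bound $K=s\cdot|\Out(S)|$. Composing all these reductions, and intersecting the resulting arithmetic progressions exactly as in the proofs of Theorems~\ref{thm:recurs} and~\ref{thm:main}, recovers the desired $t$ in quantum polynomial time.

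The orders of $\sigma$, of $\overline\sigma$, and of the orbits $\{\rho_{(g,1)^t}(1_G)\}$ that arise as exponents along the way are computed and factorized by Shor's period-finding and factoring algorithms, exactly as in Section~\ref{sec3}; the only genuinely new ingredient relative to the rest of the paper is the (classical, randomized) matrix-group structure machinery, whose quantum-polynomial-time availability in odd characteristic is precisely what the hypothesis on $\Char\F_q$ buys. Beyond the order-of-$\pi$ point above, what remains is bookkeeping: every subquotient arising in the recursion must be presented as a black-box group with a unique labeling carrying an evaluable copy of the appropriate power of $\sigma$, which the composition-tree data structure together with the ``pairs'' device provides.
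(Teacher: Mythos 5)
Your proposal is correct and follows essentially the same route as the paper's Appendix: the ``pairs'' encoding with $\psi$ as labeling, the Babai--Beals--Seress radical/socle machinery, decomposition of the semisimple layers along the $\sigma$-orbits of the simple components, the polynomial bound on outer automorphism groups of simple groups to land in the inner (or power-inner) matrix case, and the solvable algorithm for the remaining layers. The only organizational difference is that the paper works with the fixed chain $1\leq\Rad(G)\leq\Soc^*(G)\leq\Pker(G)\leq G$, disposing of $G/\Pker(G)$ via its permutation representation (so the induced automorphism is conjugation by a permutation matrix) and of $\Pker(G)/\Soc^*(G)$ via solvability of outer automorphism groups, whereas you iterate ``radical, then socle of the quotient''; both are valid and rest on the same ingredients.
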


\begin{proof}[Proof (sketch)] 

We encode the elements of $G=\mbox{Im}(\psi)$ by 
pairs $(x,\psi(x))$ and labeling $\psi(x)$ so that
we can evaluate powers of $\overline \sigma$ on
elements of the matrix group $G$. We use
the notation $\sigma$ for $\overline \sigma$.
Below we outline how the result of the polynomial time algorithm of 
Babai, Beals and Seress~\cite{BBS} for computing the 
structure of matrix groups over finite fields can be used
to obtain a series of normal subgroups together with representations of 
the factors making Theorem~\ref{thm:main} applicable
to these matrix groups.

Every finite group $G$ has a unique largest solvable
normal subgroup, called the {\em solvable radical} of $G$. 
It is denoted by $\Rad(G)$. The factor group ${\overline G}=G/\Rad(G)$
is trivial if $G$ itself is solvable. Even if $\overline G$ 
is a non-trivial group,
it has no nontrivial abelian subnormal subgroups. (Normal
subgroups of a group are subnormal, and, recursively, normal
subgroups of subnormal subgroups are also subnormal.) It follows
that the minimal subnormal subgroups of $\overline G$ are
non-commutative simple groups. They pairwise commute and the subgroup
$\Soc({\overline G})$ generated by them (called the {\em socle} of
$\overline G$) is the direct product of these simple groups. (It
follows that there are at most $\log \overline G$ simple constituents
of $\Soc({\overline G})$.) The full pre-image of $\Soc({\overline G})$
at the projection $G\rightarrow \overline G$ is denoted by 
$\Soc^*(G)$. The subgroups $\Rad(G)$ and $\Soc^*(G)$ are 
characteristic subgroups of $G$. The group $G$, by conjugation, acts as a permutation
group on the minimal subnormal subgroups of $\overline G$. The
kernel $\Pker(G)$ of this permutation representation, called the {\em permutation
kernel}, is a further characteristic subgroup. There are the following
inclusions between the subgroups introduced above. 
$$1\leq \Rad(G)\leq \Soc^*(G)\leq \Pker(G)\leq G.$$
$\Pker(G)$ acts by conjugation as an automorphism group
on each simple component of $\Soc(\overline G)$. As
the outer automorphism group of any finite simple group
is solvable, the factor group $\Pker(G)/\Soc^*(G)$ is solvable.
The algorithm of Babai, Beals and Seress~\cite{BBS} computes in 
classical randomized polynomial time (generators for) 
the three subgroups above. They also compute a permutation representation
of $G$ with kernel $\Pker(G)$ (this is actually the conjugation
action of $G$ on the simple components of $\Soc({\overline G})$);
a (usually highly intransitive) permutation representation
of $\Pker(G)$ with kernel $\Soc^*(G)$; and, most importantly,
for each of the simple components of $\Soc({\overline G})$,
a sequence of elements of $G$ that generate the component
modulo $\Rad(G)$ together with the images of these under 
an isomorphism with a standard version of the a simple group.
(The generators for each component $S$
are actually generators for a perfect subgroup $S^*$ of $\Soc^*(G)$ 
such that $(S^*/\Rad G)$ is the pre-image of $S$ by the projection
map $G\rightarrow G/\Rad(G)$.) 

We compute refinement of   the sequence 
$1\leq \Rad(G)\leq \Soc^*(G)\leq \Pker(G)\leq G$ 
between $\Rad(G)$ and $\Soc^*(G)$.
To this end we notice that $\sigma$ also permutes the 
simple components of $\Soc{\overline G}$. We take
a $\sigma$-orbit of a single simple component $S$ and
we compute
$S^{**}=\prod_T T^*\Rad(G)$, where the product is taken
over the $\sigma$-orbit of $S$. Let $r$ be the length
of the orbit. Then $\sigma^r$ acts as an automorphism
on each member of the orbit. As the outer automorphism
group of a finite simple group is of size bounded
by a polynomial of the logarithm of the group size,
we obtain that a polynomially small power of $\sigma$
acts as an inner automorphism of $S^{**}/\Rad(G)$.
If $S$ is sporadic, we use the regular representation
of $S$. If $S$ is of Lie type, we take the isomorphism
between $S$ and the standard copy of it computed by
the algorithm of \cite{BBS}. It realizes $S$ as
the quotient group of a matrix group by its center.
We obtain a matrix representation of $S$ by taking the
conjugation action on the matrix algebra spanned by
the elements of this covering group. We do the same
for each $T$ from the orbit (actually, these are isomorphic to
$S$, so the construction made for $S$ can be re-used.)
Finally we obtain a matrix representation of
$S^{**}$ with kernel $\Rad(G)$ on the direct sum
of these representations. Then we proceed with another orbit,
construct the representation of the product of the orbit members
and add to $S^{**}$. This way we obtain a chain of $\sigma$-invariant
normal subgroups between $\Rad(G)$ and $\Soc^*(G)$ together
with the matrix representations of the factors so that
case (4) of Theorem~\ref{thm:main} is applicable to them. For
$G/\Pker(G)$, we use the permutation representation which
can be naturally extended to a matrix representation. As
$\sigma$ also permutes the simple components, the induced
automorphism will be conjugation by a permutation, so case (4) is again applicable.
For $\Pker(G)/\Soc^*{G}$, we use the matrix representation
as a labeling and, by solvability, case (3) is applicable.
Finally, in $\Rad(G)$, again case (3) applies.
\end{proof}

\end{document}